\DeclareMathOperator*{\argmax}{arg\,max}
\newtheorem{theorem}{Theorem}
\newtheorem{definition}{Definition}
\newtheorem{corollary}{Corollary}
\newcommand{\je}[0]{\mathcal{J}_\mathcal{E}}
\newcommand{\ji}[0]{\mathcal{J}_\mathcal{I}}
\def\BibTeX{{\rm B\kern-.05em{\sc i\kern-.025em b}\kern-.08em
    T\kern-.1667em\lower.7ex\hbox{E}\kern-.125emX}}
\begin{document}

\title{Constructing Optimal Noise Channels for Enhanced Robustness in Quantum Machine Learning
\thanks{The project/research is supported by the Bavarian Ministry of Economic Affairs, Regional Development and Energy with funds from the Hightech Agenda Bayern.}
}

\fancypagestyle{specialfooter}{%
  \fancyhf{}
  \renewcommand\headrulewidth{0pt}
  \fancyfoot[R]{ \noindent\fbox{%
    \parbox{\textwidth}{%
        {\footnotesize \copyright 2025 IEEE. Personal use of this material is permitted. Permission from IEEE must be obtained for all other uses, in any current or future media, including reprinting/republishing this material for advertising or promotional purposes, creating new collective works, for resale or redistribution to servers or lists, or reuse of any copyrighted component of this work in other works.}
        }
    }}
}

\author{
    \IEEEauthorblockN{David Winderl\IEEEauthorrefmark{2}, Nicola Franco\IEEEauthorrefmark{2}, Jeanette Miriam Lorenz\IEEEauthorrefmark{2}}
    \IEEEauthorblockA{\IEEEauthorrefmark{2}Fraunhofer Institute for Cognitive Systems IKS, Munich, Germany
    \\\{david.winderl, nicola.franco, jeanette.miriam.lorenz\}@iks.fraunhofer.de}
}

\maketitle
\thispagestyle{plain}
\pagestyle{plain}
\thispagestyle{specialfooter} 

\begin{abstract}
Quantum Machine Learning (QML) is rapidly evolving. Recent studies suggest that quantum noise could improve classifier robustness. To verify this claim, we developed a semidefinite programming model that, given a dataset and a QML ansatz, constructs the most robust noise channel possible. In our small-scale experiments, we first explore the range of behaviors generated by this optimized noise channel. Secondly, we show that, despite recent claims, the direct impact of noise is modest compared to the substantial robustness improvements achieved through increasing the number of qubits. Aside from that, we utilize our framework to assess different noise channels in terms of their robustness and certifiability.
\end{abstract}

\begin{IEEEkeywords}
    Differential Privacy, Quantum Machine Learning,  Adversarial Robustness, Quantum Computing.
\end{IEEEkeywords}

\section{Introduction \& Related Work}
Quantum Machine Learning (QML) emerges as a prominent example of the potential applications for Noisy Intermediate-Scale Quantum (NISQ) devices~\cite{Preskill2018quantumcomputingin}.
Research into QML is motivated by the anticipation that it could significantly improve certain computational tasks, surpassing the performance of conventional algorithms~\cite{wittek2014quantum, schuld2015introduction, biamonte2017quantum, abbas2021power, cerezo2022challenges}.
Despite these potential benefits, QML faces its own set of challenges, especially concerning susceptibility to adversarial attacks~\cite{lu2020quantum, west2023benchmarking, franco2024predominant}.
Within classical machine learning, Differential Privacy (DP) has played a crucial role in striking a balance between data utility and the need for privacy protection~\cite{dwork2014algorithmic, abadi2016deep}. 
Specifically, DP has been utilized to enhance the reliability of model predictions for specific inputs~\cite{cohen2019certified, lecuyer2019certified}. 
This principle extends naturally into Quantum Computing (QC) and introduces a novel approach to safeguard the integrity and privacy of data processed by QML models~\cite{Zhou2017}. 
\medskip    

Quantum noise channels, such as depolarizing and phase damping noise in NISQ devices, are used as sources of stochastic noise. 
This noise helps achieve DP by exploiting the natural error processes of these devices~\cite{Angrisani2023, Hirche2023, Zhou2017, Weber2021, Du2021}.
With this argumentation, \citet{Weber2021} have provided a relationship among quantum hypothesis testing and adversarial robustness.
In addition, \citet{Hirche2023} has provided a relationship between quantum DP and the quantum hockestick divergence. 
\citet{Angrisani2023} is possibly one of the most noteworthy recent publications; they have provided a more general framework for the robustness upper bound of quantum noise channels by defining the neighbourhood concept in terms of the Schatten-norm and providing a more general upper bound for noise channels in terms of a depolarizing channel as well as a single qubit Pauli channel. 
This broader framework was not included in our work as we initially focused on defining such a family of noise channels.
The linear nature of quantum channels and their relationship to semidefinite programming has been already considered, e.g. \citet{Guan_2021} used semi-definite programming to derrive the optimal bound of a quantum classifier and detect adversarial examples. 
Nonetheless, the extent to which these techniques shield quantum classifiers from practical adversarial input manipulations remains unclear.

\medskip
As indicated in our earlier work~\cite{winderl2023quantum}, the effectiveness in terms of adversarial accuracy of these approaches in providing robustness varies with the factor of depolarization noise. To make this more clear, a quantum classifier with depolarization noise strength of one is inherently robust, but is equivalent to flipping a coin. This circumstance allows asking for the \textit{best} depolarization noise strenght or more general the noise channel that is as robust as possible.
To answer the question of the most robust noise channel, we develop a family of noise channels, so called $(\alpha, \gamma)$-channels. We found that every channel in this group is naturally robust because of its origin. Next, to show the effectiveness of our strategy, we offer theoretical derivations for the bounds on robustness for \textit{depolarizing noise}~\cite{Zhou2017} and \textit{random rotations}~\cite{Huang2023} from our general framework of ($\alpha$, $\gamma$)-channels. This indicates that a broader set of channels can be described in our framework. As a final step, we add a construction of an \textit{optimal} quantum channel as a positive semidefinite constrained optimization problem.
This construction allows the experimental evaluation of a depolarizing channel against its optimally constructed counterpart, providing a more definite statement about the possible utility of quantum noise channels for DP and adversarial robustness against evasion attacks.


\section{Preliminaries}

\subsection{Quantum Adversarial Robustness}
Quantum adversarial robustness examines how specially crafted attacks can weaken a QML model, just as traditional neural networks can be fooled by adversarial inputs~\cite{goodfellow2015explaining, kurakin2017adversarial, madry2019deep}. 
While there are other means of attacking a machine learning system, here we focus on evasion attacks.
In the context of classical machine learning, adversarial robustness refers to the ability of a machine learning model to maintain a stable prediction when confronted with adversarial examples. 
Formally, let $\mathcal{Y}$ denote the set of classes and $\|\cdot\|$ denote the euclidean norm on $\mathbb{R}^n$, we assume a datapoint $x\in\mathbb{R}^n$ with a label $y \in \mathcal{Y}$. 
Given $f:\mathbb{R}^n\rightarrow [0,1]^{|\mathcal{Y}|}$ be a \textit{quantum classifier} and $\epsilon>0$, we can define an \textit{adversarial attack} as the smallest amount of perturbation $\delta$ needed to change the model's prediction, formally:
\begin{equation} 
    \operatorname{argmax}_{c\in\mathcal{Y}} f_c(x+\delta) \neq \operatorname{argmax}_{c\in\mathcal{Y}} f_c(x),
\end{equation}
where $\|\delta\| \leq \epsilon$ is bounded by a maximum $\epsilon$ budget and ${x}_{adv}=x+\delta$ denotes an adversarial example.
\medskip

In the context of QC, a study by \citet{lu2020quantum} found that Quantum Neural Networks (QNN) can fall prey to these targeted perturbations. 
As part of their work, the quantum version of the Fast Gradient Sign Method (FGSM)~\cite{goodfellow2015explaining} was introduced. On a high level FGSM computes the gradient of the loss with respect to the input and then slightly alters the input in the direction that most increases the loss, effectively creating an adversarial example by using a small pertubation.

The fact that QNNs are vulnerable to those small perturbations motivates the discussion about specific measures to counteract adversarial examples. 
To measure the effectiveness of approaches, we consider \textit{robustness accuracy} as the network's ability to main stable predictions against adversarial attacks. We define this robustness accuracy as the fraction of samples that where correctly predicted by the network under attack.


\subsection{Quantum Differential Privacy}
Differential Privacy (DP) with $(\epsilon, \delta)$ limits data exposure: a smaller $\epsilon$ boosts privacy, while $\delta$ bounds the risk of failure~\citep{dwork2014algorithmic}.
In this work, we focus entirely on $\epsilon$-DP, which implies that $\delta = 0$.~\footnote{This is justified, since typically $\epsilon$-DP is used for classification problems. Nevertheless as pointed out by \citet{Angrisani2023} $\epsilon$-DP cannot describe depolarizing noise entirely.}
Formally a mechanism M is said to be $\epsilon$-differentially private if for every pair of neighboring datasets ($D$ and $D^\prime$) and for every possible set S of outputs, the following holds:
\begin{equation}\label{eq:dp_classical}
    \text{Pr}(M(D) \in S) \leq e^\epsilon \text{Pr}(M(D^\prime) \in S),
\end{equation}
where Pr indicates \textit{probability}.
Before introducing the concept of quantum differential privacy, we want to elaborate on the concept of distance and the concept of processing in quantum computing. 
Concerning the distance, it can be defined as the trace distance between two density matrices $\rho$ and $\sigma$:
\begin{equation}\label{eq:trace_distance}
    \tau(\sigma, \rho) = \frac{1}{2} \text{Tr} \left| \sigma - \rho \right|  = \frac{1}{2} \text{Tr} \left[ \sqrt{(\sigma - \rho)^\dagger (\sigma - \rho)} \right]
\end{equation}
Computations on quantum devices are dictated by a quantum channel $\mathcal{E}$, which is a Completly Positive and Trace Preserving (CPTP) map. 
Outcomes can be characterized by a set of Positive Operator-Valued Measurements (POVMs) $\{\Pi_k\}$. 
Hence, we can express the outcome of a quantum classifer, which we assume aligning with the definition above to be a mechanism:
\begin{equation}
    \text{Pr}(M(\sigma) \in S) = y_k(\sigma) = \text{Tr} \left[\Pi_k \mathcal{E}(\sigma)\right]
\end{equation}
This leads to the definition of quantum $\epsilon$-DP on a noisy quantum algorithm $M$, by stating that for every pair of density matrices ($\sigma, \rho$) within a specific distance $\tau(\sigma, \rho)$, it holds for all possible measurement outcomes, that:
\begin{equation}\label{eq:fraction_dp}
y_k(\rho) \leq e^\epsilon y_k(\sigma), \quad \text{or} \quad \frac{y_k(\rho)}{y_k(\sigma)} \leq e^\epsilon. 
\end{equation}
\autoref{eq:fraction_dp} is of particular interest, since it indicates, that $\epsilon$ controls the change of labels for two neighbouring states.

\subsection{Bound of \texorpdfstring{$\epsilon$}{e}-DP}

Recent research on quantum DP incorporates a specific noise channel alongside the quantum channel that models the quantum classifier to establish concrete upper bounds for $\epsilon$. 
Here, we aim to outline two particular bounds relevant to our study: (i) depolarizing noise and (ii) random rotations.

\medskip
\subsubsection{Bound on Depolarizing Noise}
A depolarizing noise channel\footnote{Note that this defines a global noise channel.} is characterized by the parameter $p$, defined as follows:
\begin{equation}
    \mathcal{E}(\rho) = \frac{p}{N} I_N + (1-p) \rho.
\end{equation}
Here $N=2^n$ is the dimensionality of the density matrix and $I_N$ the identity matrix of dimensionality $N$.
\citet{Zhou2017}, initially showed that a depolarizing channel suffices $\epsilon$-DP. So given two input states $\rho$ and $\sigma$, bound by a trace distance: $\tau(\sigma, \rho) \leq \tau_D$ and a quantum classification algorithm with a POVM, perturbed by a depolarizing noise channel, the output probabilities of $\rho$ and $\sigma$ are $\epsilon$-DP: 
\begin{equation}\label{eq:dep_upperbound}
\epsilon = \ln \left( \frac{D(1-p) \tau_D}{p} + 1\right),
\end{equation}
where $D$ is the dimension of the measurement operator used in the POVM. 
\citet{Du2021} expanded on this concept by noting that the placement of the depolarizing channel within the quantum circuit does not affect the robustness guarantee. 
Additionally, they pointed out that $\epsilon$-DP can be utilized to enhance adversarial robustness.
\medskip

\subsubsection{Bound on random rotations}

\citet{Huang2023} provided a bound for $\epsilon$-DP, based on a process, where at first a layer of random rotations parametrized by $\theta_i$, with: $h_1 < \tan (\theta_i) < h_2$ are placed, further they show that for \textit{binary classification}, the following bound can be derived:
\begin{equation}
    \epsilon = \ln \left[\frac{\tau_D}{t^n} +1\right],
\end{equation}
where $t = \frac{h+1}{\sqrt[n]{2}}$ is the noise level.~\footnote{Note that to the best of our knowledge $h$ was not formally defined in \citet{Huang2023}. We concluded that it originated as $h$ from the proof in \cite[Appendix: Proof of Lemma 1]{Huang2023}. The lower bound was given as $\frac{(h+1)^n}{2}$, which we equated to $t^n = \frac{(h+1)^n}{2}$ in the definition of their $\tau_D$ and solved accordingly.} Their bound is of particular interest, since it is independent of the underlying device.

\subsection{Choi representation of noise channels}
As modern quantum devices are in the NISQ era~\cite{Preskill2018quantumcomputingin}, they are commonly affected by noise. 
This noise is often characterized using Kraus operators:
\begin{equation}\label{eq:kraus_channel}
    \mathcal{E}(\rho) = \sum_i K_i \rho K_i^\dagger \quad \text{where} \quad \sum_i K_i K_i^\dagger = I.
\end{equation}
In constrast, we utilize the Choi–Jamiołkowski isomorphism~\cite{CHOI1975285}. 
This formalism is able to represent an $n$-qubit noise channel of dimensionality $d = 2^n$ as a complex, completely positive, and trace preserving matrix:
\begin{equation}
    \je = (\mathcal{E} \otimes I_A) (\ket{\phi^+}\bra{\phi^+}) \in \mathbb{C}^{d^2 \times d^2},
\end{equation}
where $\otimes$ is a tensor product and $\ket{\phi^+}$ represents the fully entangled state and $A$ the auxiliary system which we assume to always have the same dimensionality as our \( n \)-qubit system. 
On a further note, we refer to the Choi matrix that describes the identity channel as identity Choi matrix: \( \ji = \ket{\phi^+}\bra{\phi^+} \).

\medskip

Next, it is necessary to ensure that \( \je \) is CPTP for our resulting quantum channel. 
This can be achieved by asserting that the Choi matrix is positive semidefinite (\( \je \succeq 0 \)) and that tracing out the system describing the noise channel lead to an identity matrix of the size of the subsystem (\( \text{Tr}_2 \left[\je\right] = I_A \)).
Thus, we can construct the application of a noise channel as~\footnote{See \citet{Wood_2013} for a visualization using tensor networks.}~\cite{Knee_2019}:
\begin{equation}
    \mathcal{E}(\rho) = \text{Tr}_1 \left[ (\rho \otimes I_A) \je  \right].
\end{equation}
Consequently, a POVM is constructed as follows~\cite{Knee_2019}:
\begin{equation}\label{eq:povm_choi}
    \tilde{y}_k(\rho) = \text{Tr} \left[ (\rho \otimes \Pi_k) \je  \right],
\end{equation}
where \( \Pi_k \) is a measurement operator. 

\medskip

Further, a Kraus-channel can be reconstructed from a Choi matrix by looking at its spectral decomposition: \( \mathcal{J}(\mathcal{E})  = \sum_i \lambda_i \ket{\Psi_i}\bra{\Psi_i} \)~\cite{Wood_2013}. One can then construct a matrix \( A_i \), s.t \( \lvert A_j \rangle \rangle = \sum_{i, j} A_{ij} \ket{i} \otimes \ket{j} = \ket{\Psi_i} \).~\footnote{Since we are assuming the standard basis for the construction of superoperators, this can simply be done by reshaping the matrix.}.

The individual Kraus operators can then be recovered as:
\begin{equation}
    K_i = \sqrt{\lambda_i} A_i
\end{equation}








\section{Introduction of (\texorpdfstring{$\alpha$}{alpha}, \texorpdfstring{$\gamma$}{gamma})-Channels}
In this section, we introduce a family of quantum noise channels, which allow us to derive $\epsilon$-DP bounds and robustness guarantees. 
In reviewing the proofs concerning $\epsilon$-DP conducted by \citet{Du2021}, \citet{Huang2023} and \citet{Zhou2017}, 
we observe that their bounds share similarities, particularly in their use of the property that trace preserving operations are \textit{contractive} with respect to the trace distance, more formally:
\begin{equation}
    \tau(\mathcal{E}(\sigma), \mathcal{E}(\rho)) \leq \tau(\sigma, \rho).
\end{equation}
In addition, they typically manage to establish bounds for the minimum probability of the noise channel, i.e: $y_k(\sigma) > 0$.
This reasoning inspires the definition of a family of noise channels that can be demonstrated to meet the requirements of $\epsilon$-DP.
\medskip

Building on this idea, we formalize the concept by introducing $(\alpha, \gamma)$-channels:
\begin{definition}[$(\alpha, \gamma)$-channels]\label{def:alpha_gamma_channel}
    We define an $(\alpha, \gamma)$-channel as an arbitrary quantum channel $\mathcal{E}_{(\alpha, \gamma)}(\rho): \mathbb{C}^N \rightarrow \mathbb{C}^N$, with the following properties:
    \begin{enumerate}
        \item $\forall \sigma, \rho. \quad \tau(\mathcal{E}_{(\alpha, \gamma)}(\rho), \mathcal{E}_{(\alpha, \gamma)}(\sigma)) \leq \alpha \tau(\sigma, \rho)$
        \item $\forall \Pi_k, \sigma. \quad \text{Tr} \left[\Pi_k \mathcal{E}_{(\alpha, \gamma)}(\sigma)\right] \geq \gamma$
    \end{enumerate}
    with $\alpha, \gamma \in [0;1]$.
\end{definition}
\smallskip

Intuitively, any $(\alpha, \gamma)$-channel has a contraction factor of at most $\alpha$. Any input state $\sigma$ will result in at most $\gamma$ when measured. Note that each measurement is defined by their POVM Projector $\Pi_k$.

\subsection{\texorpdfstring{$\epsilon$}{e}-DP of \texorpdfstring{$(\alpha, \gamma)$}{alpha, gamma}-channels}

As already hinted in our introduction, the definition is inspired by reviewing the process of thought used in \citet{Du2021}, \citet{Zhou2017} and \citet{Huang2023}.
Hence, we use the underlying assumptions we have provided for our $(\alpha, \gamma)$-Channels to show $\epsilon$-DP for any channels that follow \autoref{def:alpha_gamma_channel}:
\begin{theorem}
    For all input states $\rho$ and $\sigma$, which are bounded by a trace distance $\tau(\rho, \sigma) \leq \tau_D$, any $(\alpha, \gamma)$-channel, $\mathcal{E}_{(\alpha, \gamma)}(\rho)$ suffices $\epsilon$-DP, with:
    $$
    \epsilon = \ln \left[1+\frac{\alpha \tau_D}{\gamma}\right]
    $$
\end{theorem}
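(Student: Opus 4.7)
The plan is to establish the differential privacy ratio $y_k(\rho)/y_k(\sigma) \leq e^\epsilon$ directly, by bounding the numerator using Property~1 of Definition~\ref{def:alpha_gamma_channel} and the denominator using Property~2.

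First, I would rewrite $y_k(\rho) = y_k(\sigma) + \bigl(y_k(\rho)-y_k(\sigma)\bigr)$ and estimate the difference. Since $y_k(\sigma)=\mathrm{Tr}[\Pi_k \mathcal{E}_{(\alpha,\gamma)}(\sigma)]$ and $\Pi_k$ is a POVM element with $0 \preceq \Pi_k \preceq I$, the variational (Helstrom) characterization of trace distance yields
\begin{equation}
    y_k(\rho) - y_k(\sigma) = \mathrm{Tr}\bigl[\Pi_k\bigl(\mathcal{E}_{(\alpha,\gamma)}(\rho) - \mathcal{E}_{(\alpha,\gamma)}(\sigma)\bigr)\bigr] \leq \tau\bigl(\mathcal{E}_{(\alpha,\gamma)}(\rho), \mathcal{E}_{(\alpha,\gamma)}(\sigma)\bigr).
\end{equation}
Invoking Property~1 of the $(\alpha,\gamma)$-channel, this upper bound contracts to $\alpha\,\tau(\rho,\sigma)$, and the hypothesis $\tau(\rho,\sigma)\leq\tau_D$ gives the clean estimate $y_k(\rho)-y_k(\sigma) \leq \alpha\tau_D$.

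Second, I would use Property~2, namely $y_k(\sigma)\geq\gamma$, to normalize. Dividing through by $y_k(\sigma)>0$,
\begin{equation}
    \frac{y_k(\rho)}{y_k(\sigma)} \leq 1 + \frac{\alpha\tau_D}{y_k(\sigma)} \leq 1 + \frac{\alpha\tau_D}{\gamma},
\end{equation}
and taking the logarithm identifies the claimed $\epsilon = \ln\bigl[1+\alpha\tau_D/\gamma\bigr]$.

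I do not expect a genuine obstacle here: the whole proof is essentially an application of the operational meaning of the trace distance (any POVM effect can distinguish two states at most by their trace distance) together with the two defining properties of an $(\alpha,\gamma)$-channel. The only subtle step worth being explicit about is the first inequality above, where one must justify why $\mathrm{Tr}[\Pi_k(\mathcal{E}(\rho)-\mathcal{E}(\sigma))]$ is bounded by $\tau(\mathcal{E}(\rho),\mathcal{E}(\sigma))$ rather than by twice that quantity; this follows from the standard fact that $\tau(\rho,\sigma)=\max_{0\preceq P\preceq I}\mathrm{Tr}[P(\rho-\sigma)]$, so that any single POVM element yields a one-sided, non-absolute-value bound, which is exactly what $\epsilon$-DP requires.
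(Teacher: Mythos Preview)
Your proposal is correct and follows essentially the same route as the paper: bound $y_k(\rho)-y_k(\sigma)$ by the trace distance of the channel outputs, apply Property~1 to contract by $\alpha$, then divide by the lower bound $\gamma$ from Property~2. The only cosmetic difference is that the paper bounds $|y_k(\rho)-y_k(\sigma)|$ via the H\"older-type inequality $|\mathrm{Tr}[\Pi_k X]|\leq \tfrac12\|\Pi_k\|_\infty\|X\|_1$ with $\|\Pi_k\|_\infty=1$, whereas you invoke the variational characterization $\tau(\rho,\sigma)=\max_{0\preceq P\preceq I}\mathrm{Tr}[P(\rho-\sigma)]$ to obtain the one-sided bound directly; both lead to the same estimate.
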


\begin{proof}
We  assume the noisy probabilities as $\tilde{y}_k(\sigma) = \text{Tr}\left[\Pi_k \tilde{\mathcal{E}}(\sigma)\right]$ and the clean as $y_k(\sigma) = \text{Tr}\left[\Pi_k \mathcal{E}(\sigma)\right]$. 
We can note that the absolute difference between the two probabilities is bound by their sensitivity (See \citet[Section 8.1]{Angrisani2023}):
\begin{align*}
    | \tilde{y}_k(\sigma) - y_k(\rho) | & = \\
    | \text{Tr}\left[\Pi_k (\tilde{\mathcal{E}}(\sigma) - \tilde{\mathcal{E}}(\rho))\right] | & \leq \\
    \frac{1}{2} \lVert \Pi_k \rVert_\infty \lVert (\tilde{\mathcal{E}}(\sigma) - \tilde{\mathcal{E}}(\rho)) \rVert_1 & = \\
    \lVert \Pi_k \rVert_\infty \tau(\tilde{\mathcal{E}}(\sigma), \tilde{\mathcal{E}}(\rho)) & \overset{\text{POVM norm is one}}{=} \\
    \tau(\tilde{\mathcal{E}}(\sigma), \tilde{\mathcal{E}}(\rho)) & \overset{\text{\autoref{def:alpha_gamma_channel}}}{\leq} \\ 
    \alpha \tau(\sigma, \rho) 
\end{align*}
Secondly, we can note that for any POVM Measurement:
\begin{equation}
    y_k(\sigma) = \text{Tr} \left[\Pi_k \mathcal{E}_{(\alpha, \gamma)}(\sigma)\right] \geq \gamma
\end{equation}
With those two relations, we can derive an upper bound the DP relation:
\begin{align*}
    \frac{y_k(\rho)}{y_k(\sigma)} - 1 &= \frac{y_k(\rho) - y_k(\sigma)}{y_k(\sigma)} &\leq 
    \frac{\left| y_k(\rho) - y_k(\sigma) \right|}{y_k(\sigma)} \\ &\leq \frac{\alpha \tau(\sigma, \rho)}{\gamma} \leq\frac{\alpha \tau_D}{\gamma}  
\end{align*}

Given now:
\begin{equation}\label{eq:eps_and_delta}
    \epsilon = \ln \left[1+\frac{\alpha \tau_D}{\gamma}\right]
\end{equation}
We can resubstitute \autoref{eq:eps_and_delta} in the last relation above, which equates to the same term:
\begin{equation*}
    e^\epsilon - 1 = e^{ \ln \left[1+\frac{\alpha \tau_D}{\gamma}\right]} - 1  =  \frac{\alpha \tau_D}{\gamma} 
\end{equation*}
\end{proof}

\subsection{Relation of \texorpdfstring{$\alpha, \gamma$}{alpha, gamma}-channels to depolarizing channels}
Now that we have introduced an $(\alpha, \gamma)$-channel, it is necessary to consider whether the bound provided for $\epsilon$-DP is tight.
Specifically, given that the depolarizing channel is one of the most significant sources of noise, it is logical to demonstrate a bound that it incorporates the results of \citet{Du2021}.
In addition, to outline the extendibility of our framework, we exploit our concept of $(\alpha, \gamma)$-channels to include the same bound as \citet{Huang2023}.

\medskip

\subsubsection{Depolarizing Noise}\label{sec:depolarizing_noise}


One can find that the contraction of a depolarizing noise channel with value $p$ is defined as $\kappa = (1-p)$~\citet[Chapter 9]{nielsen_chuang}, from which we directly conclude that $\alpha = (1-p)$.
To upper bound the smallest eigenvalue of a noise channel, we follow the line of argument of \citet{Du2021}. 
Essentially they argue, that the smallest possible value of the noisy channel is given as $\text{Tr} \left[\Pi_k \mathcal{E}(\sigma)\right] = \nicefrac{p}{D}$.
Substituting this into our bound of $\epsilon$-DP, it is apparent that we arrive at the same bound as in \autoref{eq:dep_upperbound}:
\begin{equation}
    \epsilon = \ln \left[1 + \frac{(1-p) \tau_D}{\nicefrac{p}{D}}\right] = \ln \left[1 + D\frac{(1-p) \tau_D}{p}\right] 
\end{equation}

\subsubsection{Random Rotation Bound}
\citet{Huang2023} described their noise process as incorporating random rotations \textit{before} the operation of the quantum classifier. 
Given that this process still constitutes a quantum channel, we can safely set the trivial upper bound for $\alpha$ at $1$. 
To establish a bound on $\gamma$, we consider the scenario of a binary classifier~\footnote{A base assumption on the robustness in \citet{Huang2023}}. 
This assumption allows us to assume that the majority class must be at least greater or equal than $\frac{(1+h)^n}{2}$, with $h$ the lower bound on the random rotations in the $R_x$-Gates: 
\begin{equation}
    y_C(\sigma) > \frac{(1+h)^n}{2}.
\end{equation}
Hence, we assume w.l.o.g.:
\begin{equation*}
    y_C = \frac{(1+h)^n}{2} \geq 0.5.
\end{equation*}
From this, we deduce the following for the subsequent class:
\begin{equation*}
    y_{k \neq C} = 1 - \frac{(1+h)^n}{2} \leq \frac{(1+h)^n}{2},
\end{equation*}
and for all labels $y_C, y_{k \neq C} \geq \frac{(1+h)^n}{2}$.
\medskip

This provides us with $\gamma = \frac{(1+h)^n}{2}$ and by substituting $t = (h+1)\cdot2^{1-n}$, we can find $\gamma = t^n$.

In the end, setting $\alpha = 1$ and $\gamma = t^n$ in \autoref{eq:dep_upperbound} yield the same bound as \citet{Huang2023}:
\begin{equation*}
    \epsilon = \ln \left(\frac{\tau_D}{t^n} + 1\right)
\end{equation*}

\subsection{Certifiable robustness on \texorpdfstring{$(\alpha, \gamma)$}{alpha, gamma}-channels}

This section builds on the robustness bound from \citet{Du2021} and generalizes it to an $(\alpha, \gamma)$-channel. 
We focus on the infinite sampling case since we base our assumptions on a hypothetical model, which can be simulated but not implemented. 

\begin{theorem}[Infinite sampling case]\label{th:infinite_sampling}
    Given a binary classifier, altered by a $(\alpha, \gamma)$-channel: $y_k(\sigma) = \text{Tr} \left[ \Pi_k \mathcal{E}_{\alpha, \gamma}(\sigma) \right]$. Suppose that for a predicted class $C$, the following holds:
    \begin{equation*}
        y_C(\sigma) > e^{2\epsilon} y_{k \neq C}(\sigma) \qquad \epsilon = \ln \left[1+\frac{\alpha \tau_D}{\gamma}\right],
    \end{equation*}
    than for a benign state $\rho$ with $\tau(\sigma, \rho) \leq \tau_D$, the predicted class label $C$, will not change:
    \begin{equation*}
        \argmax_{k} y_k(\sigma) = C =  \argmax_{k} y_k(\rho)
    \end{equation*}
\end{theorem}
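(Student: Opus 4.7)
The plan is to reduce the argmax stability claim to two applications of the $\epsilon$-DP inequality established in the previous theorem. Since both $\sigma$ and $\rho$ are valid input states with $\tau(\sigma, \rho) \leq \tau_D$, and since the theorem on $\epsilon$-DP of $(\alpha, \gamma)$-channels was proved symmetrically in $\sigma, \rho$, I can apply it in both directions. Concretely, for every class label $k$ I obtain
\begin{equation*}
    e^{-\epsilon}\, y_k(\sigma) \;\leq\; y_k(\rho) \;\leq\; e^{\epsilon}\, y_k(\sigma),
\end{equation*}
with the same $\epsilon = \ln[1 + \alpha \tau_D/\gamma]$ as in the assumption.

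Next I would instantiate these bounds for the two relevant classes in the binary setting. For the predicted class $C$, I use the lower bound $y_C(\rho) \geq e^{-\epsilon} y_C(\sigma)$. For the other class, I use the upper bound $y_{k \neq C}(\rho) \leq e^{\epsilon} y_{k \neq C}(\sigma)$. Plugging the hypothesis $y_C(\sigma) > e^{2\epsilon} y_{k \neq C}(\sigma)$ into the first inequality then yields
\begin{equation*}
    y_C(\rho) \;\geq\; e^{-\epsilon} y_C(\sigma) \;>\; e^{-\epsilon} \cdot e^{2\epsilon}\, y_{k \neq C}(\sigma) \;=\; e^{\epsilon}\, y_{k \neq C}(\sigma) \;\geq\; y_{k \neq C}(\rho),
\end{equation*}
so $y_C(\rho) > y_{k \neq C}(\rho)$ and the $\argmax$ remains $C$.

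I do not expect any real obstacle here: the argument is the standard ``gap of $e^{2\epsilon}$'' certificate used by Cohen et al.\ in the classical setting and mirrored by \citet{Du2021} in the quantum depolarizing case, now lifted to the abstract $(\alpha, \gamma)$ framework. The only subtle point is justifying the two-sided DP inequality for an arbitrary $k$, including $k \neq C$; this is immediate because the $\epsilon$-DP guarantee holds for every POVM element $\Pi_k$, not just the one corresponding to the predicted class, and because the trace-distance bound $\tau(\sigma,\rho) \leq \tau_D$ is symmetric. Once that is noted, the remainder is a one-line chain of inequalities.
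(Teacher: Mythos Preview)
Your proposal is correct and follows essentially the same approach as the paper: both apply the two-sided $\epsilon$-DP inequality (lower bound for $C$, upper bound for $k\neq C$) and chain them through the hypothesis $y_C(\sigma) > e^{2\epsilon} y_{k\neq C}(\sigma)$ to conclude $y_C(\rho) > y_{k\neq C}(\rho)$. Your presentation is slightly more explicit in stating the two-sided bound up front and in justifying why the DP inequality applies to every POVM element, but the argument is identical in substance.
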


\begin{proof}
We can construct the proof similarly as in \citet{Du2021}, by simply replacing the definition of $\epsilon$.
We will first note that due to $\epsilon$-DP:
\begin{equation}
    y_C(\rho) \geq e^{-\epsilon} y_C(\sigma)
\end{equation}
Hence using $ y_C(\sigma) > e^{2\epsilon} y_{k \neq C}(\sigma)$:
\begin{equation}
     y_C(\rho) \geq e^{-\epsilon} y_C(\sigma) > e^{\epsilon} y_{k \neq C}(\sigma)
\end{equation}
Next, we can note that:
\begin{equation}
    y_{k \neq C}(\rho) \leq e^\epsilon y_{k \neq C}(\sigma)
\end{equation}
Which completes the relation:
\begin{equation}
     y_C(\rho) \geq e^{-\epsilon} y_C(\sigma) > e^{\epsilon} y_{k \neq C}(\sigma) \leq  y_{k \neq C}(\rho)
\end{equation}
Since $y_C(\rho) > y_{k \neq C}(\rho)$, we can conclude that:
\begin{equation}
    \argmax_{k} y_k(\sigma) = C =  \argmax_{k} y_k(\rho)
\end{equation}
\end{proof}

Based on Theorem~\ref{th:infinite_sampling}, if the ratio $\frac{\tilde{y}_C(\sigma)}{\tilde{y}_{C \neq k}(\sigma)}$ exceeds $e^{2\epsilon}$, we can certify the data point against any benign sample within radius $\tau_D$. 
This leads to the following corollary:
\begin{corollary}\label{corr:cert_distance}
    Given a \textit{binary} quantum classifier perturbed by a $(\alpha, \gamma)$-channel, with:
    \begin{equation*}
        B \equiv \frac{\tilde{y}_C(\sigma)}{\tilde{y}_{C \neq k}(\sigma)}
    \end{equation*}
    This classifier is robust against any perturbation: $\sigma \rightarrow \rho$, with $\tau(\sigma, \rho) \leq \tau_D$, if:
    \begin{equation*}
        \frac{\tilde{y}_C(\sigma)}{\tilde{y}_{C \neq k}(\sigma)} = B > e^{2\epsilon}
    \end{equation*}
    With: $\epsilon = \ln \left[1+\frac{\alpha \tau_D}{\gamma}\right]$
\end{corollary}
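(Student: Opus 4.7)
The plan is to recognize that Corollary~\ref{corr:cert_distance} is essentially a direct reformulation of Theorem~\ref{th:infinite_sampling}, with the ratio $B$ playing the role of the explicit inequality between class probabilities. So the proof is really a short chain: rewrite the hypothesis $B > e^{2\epsilon}$ as a strict inequality of the form required by the theorem, check that the value of $\epsilon$ matches, and then invoke the theorem.

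First, I would unfold the definition of $B$ and multiply through to obtain $\tilde{y}_C(\sigma) > e^{2\epsilon}\,\tilde{y}_{C\neq k}(\sigma)$. For this step it is worth noting that $\tilde{y}_{C\neq k}(\sigma) \geq \gamma > 0$ by property~2 of Definition~\ref{def:alpha_gamma_channel}, so dividing/multiplying by this quantity is legitimate and preserves the direction of the inequality. If one wanted to be careful in the edge case $\gamma = 0$, one would just work with the original ratio form directly.

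Next, I would observe that the value $\epsilon = \ln[1 + \alpha\tau_D/\gamma]$ specified in the corollary is exactly the $\epsilon$-DP bound proved earlier for $(\alpha,\gamma)$-channels, which is the same $\epsilon$ that enters Theorem~\ref{th:infinite_sampling}. Hence the hypotheses of Theorem~\ref{th:infinite_sampling} are met with the predicted class $C$ and any benign $\rho$ satisfying $\tau(\sigma,\rho)\leq \tau_D$.

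Finally, I would conclude by citing Theorem~\ref{th:infinite_sampling} to deduce $\argmax_k \tilde{y}_k(\sigma) = C = \argmax_k \tilde{y}_k(\rho)$, which is precisely the robustness statement required. I do not expect a genuine obstacle here; the only thing to watch is the bookkeeping between the $\tilde y$ notation used in the corollary and the $y$ notation used in the theorem statement, and the implicit assumption that we are in the binary setting so that ``$k \neq C$'' refers to a unique other class. Both are immediate from the context.
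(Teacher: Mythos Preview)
Your proposal is correct and matches the paper's own treatment: the corollary is presented there as an immediate restatement of Theorem~\ref{th:infinite_sampling}, obtained by writing the hypothesis as the ratio $B>e^{2\epsilon}$ with the same $\epsilon=\ln[1+\alpha\tau_D/\gamma]$. Your extra care about $\tilde y_{C\neq k}(\sigma)\geq\gamma>0$ and the binary-classifier bookkeeping is sound and slightly more explicit than the paper, but the approach is identical.
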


Building on our prior analysis of depolarizing noise (\ref{sec:depolarizing_noise}), where we established a certifiable distance for a single data point based on network output probabilities, we can now determine the minimum value of $\epsilon$ that bounds the fraction $B$. This involves calculating:
\begin{equation}\label{eq:eminlowerbound}
    \epsilon_{min} = \frac{1}{2}\ln \left( \nicefrac{y_C(\sigma)}{y_{C \neq k}(\sigma)} \right)
\end{equation}
Next, one can simply use $\epsilon = \ln \left[1+\frac{\alpha \tau_D}{\gamma}\right]$, which will lead to a certifiable distance defined as:
\begin{equation}\label{eq:cert_distance}
    \tau_D = (e^{\epsilon_{min}}-1) \cdot \frac{\gamma}{\alpha}
\end{equation}
Note that for $\alpha = 0$, we would get a $\tau_D = \infty$, which highlights the limitations of our approach and the necessity of incorporating $(\epsilon, \delta)$-DP into the analysis of $(\alpha, \gamma)$-channels.

\section{Construction of optimal (\texorpdfstring{$\alpha$}{alpha}, \texorpdfstring{$\gamma$}{gamma})-Channels}

In this section, we investigate optimal noise channels for adversarial robustness. 
We formulate a semidefinite program (SDP) to find the best channel based on the certification distance from~\autoref{corr:cert_distance}.
\medskip

We consider a single data point (later expanded to a set) represented by a density matrix $\sigma = \sigma(x)$. 
A pretrained quantum classifier described by a unitary $U$ acting on $n$-qubits and measured on $d$ qubits by $2^d$ POVM-measurements: $\{\Pi_k\}$. 
We assume a true label $C$ (w.l.o.g.).
For brevity, we denote $\sigma_D = U \sigma U^\dagger$ and abbreviate this by $\sigma$.
Each class propability is defined by the POVM-Measurement: 
$y_k(\sigma) = \text{Tr} \left[ \Pi_k \mathcal{E}_{(\alpha, \gamma)}(\sigma) \right]$,
where $\mathcal{E}_{(\alpha, \gamma)}$ represents the noise channel with parameters $\alpha$ and $\gamma$.
Given a threshold $B = \frac{y_C(\sigma)}{y_{k \neq C}(\sigma)} > e^{2\epsilon}$, we recognize that $\epsilon$ is fixed by the chosen noise channel. 
Hence, to maximize classifier robustness, a natural approach is to maximize the fraction $B$.
More specifically we can formulate this as the following optimization problem:
\begin{equation}
    \begin{aligned}
        \max \quad & \quad \nicefrac{y_C(\sigma)}{y_{k \neq C}(\sigma)}, \\
        \text{s.t:} \quad 
        & y_k(\sigma) = \text{Tr} \left[\Pi_k \mathcal{E}(\sigma)\right], \\
        & \mathcal{E}( . ) \; \text{is a quantum channel}, \\
        & \forall \sigma, \rho \quad \tau(\mathcal{E}(\sigma), \mathcal{E}(\rho)) \leq \alpha \tau(\sigma, \rho) + \beta, \\
        & \forall k \qquad y_k(\sigma) \geq \gamma.
    \end{aligned}
\end{equation}

Unfortunately, the current problem formulation violates the rules of convexity and might not be directly solvable using SDPs. 
To address this, in the following subsections, we introduce reformulations to make it suitable for SDPs.

\medskip

\subsubsection{CPTP Properties}\label{sec:channel}

An underlying assumption of this work is that quantum channels can be described by Choi matrices, which must be CPTP. 
As pointed out earlier, we use the Choi–Jamiołkowski isomorphism~\cite{CHOI1975285}, since kraus operators do not necessarly provide a convex or linear representation of our variables, which is a necessity in SDPs. 
We follow \citet{Knee_2018} and encode the two properties (CP and TP) individually as follows:
\begin{align}
    \je  \succeq 0 \label{eq:cp} \\
    \text{Tr}_\text{out} \left[ \je  \right] = I_A \label{eq:tp}
\end{align}
\medskip

\subsubsection{\texorpdfstring{$\alpha$}{alpha}-Bound for the SDP}\label{sec:const1}
To constrain the $\alpha$ parameter, we leverage \citet[Theorem 8.17]{wolf2012quantum}. 
While the original source includes a disclaimer about potential inaccuracies, we provide a complete proof in Appendix~\ref{apx:proof_theorem817} for clarity.
Here, we directly utilize \citet[Theorem 8.17]{wolf2012quantum} to bound the contraction $\alpha = (1-\kappa)$ within our SDP formulation, as shown below:
\begin{equation}
    \mathcal{E} - \kappa \mathcal{E}^\prime \succeq 0.
\end{equation}
Here $\mathcal{E}$ is our noise channel and $\mathcal{E}^\prime$ describes a reference channel. \citet{wolf2012quantum} suggests to set: $\mathcal{E}^\prime = \nicefrac{I}{d}$ the fully mixed state as a reference, which will yield the following constraint on the Choi matrix of our noise channel:
\begin{equation}
    \je - \frac{\kappa}{d} I \succeq 0
\end{equation}
While the current constraint provides a valid upper bound, it has limitations. 
Given that the smallest eigenvalue of $\je$ is zero, this bound offers no advantage over the trivial case $\kappa = 0 \implies (1-\kappa) = 1$. 
Nonetheless, due to the lack of better methods for encoding the contraction factor within SDPs, we employ this approach, acknowledging that it restricts the representable noise channels.

\subsection{\texorpdfstring{$\gamma$}{gamma}-Bound for the SDP}\label{sec:const2}
To enforce the constraint on $\gamma$, we refrained from solving the SDP for all possible density matrices due to the clear in applicability of encoding a continuous set in a discrete setting. Henceforth, we achieve a similar behavior by encoding the following constraint in our PSD:
\begin{equation}
    \forall \Pi_k \quad \text{Tr}_2 \left[(\Pi_k \otimes I) \je \right] - \gamma I \succeq 0
\end{equation}
Since any POVM measurement in the context of Choi matrices is defined as in \autoref{eq:povm_choi}, we can follow from the constraint, that: $\text{Tr}_2 \left[ (I \otimes \rho^T)\je \right] \geq \gamma I $, which we can use to verify that all POVM measurements are bounded by $\gamma$: 
\begin{align*}
    y_k(\sigma)   &= \text{Tr} \left[(\Pi_k \otimes \rho^T) \je\right] \\
                &= \text{Tr} \left[ \Pi_k \text{Tr}_2 \left[ (I \otimes \sigma^T)\je \right]\right] \\
                & \geq \text{Tr} \left[ \Pi_k \gamma I\right] \\
                & = \gamma 
\end{align*}

\subsection{Combination of the SDP formulation}

Given the constraints above, we can formulate an SDP to optimize for the best possible noise channel. 
Note that the function $\nicefrac{y_C(\rho)}{y_{C \neq k}(\rho)}$ is non-convex and therefore not applicable in the context of SDP formulations. 
Hence, we simply maximize $y_C(\rho)$, since for the binary case $y_{C \neq k}(\rho) = 1-y_C(\rho)$, hence the target optimization goal should still be met.
With the constraints established, we can now formulate an SDP to find the optimal noise channel. However, the original objective function involving the ratio $\frac{y_C(\rho)}{y_{C \neq k}(\rho)}$ is non-convex and unsuitable for SDPs.
To address this, we propose a simpler approach: maximize $y_C(\rho)$. 
Since in binary classification, the probabilities for the two classes sum to 1 ($y_{C \neq k}(\rho) = 1 - y_C(\rho)$), 
than, maximizing the probability $y_C(\rho)$ for the correct class inherently minimizes the probability for the incorrect class.
\medskip

So far we have only assumed one data sample in our line of argument, since solving the SDP for every possible input will correspond to label leaking, we decided to construct the noise channel over the sum of all possible probabilities $y_{C_i}(\sigma_i)$ in the test set. We have chosen the test set to ensure that the noise channel is optimal for the samples attacked in our evaluation and no assumption about the generality have to be made. Next, we consider the class imbalance present on the test set by computing the weight vector $w = \frac{\sum_k C_k}{\lvert C \rvert}$, which leads to the overall optimization target:$\sum_i w_i \cdot y_{C_i}(\sigma_i)$.

During our evaluation, we found overall two scenarios that can be solved using our SDP: At first, one can apply the noise channel \textit{after} the quantum classifier (post-order SDP) and \textit{infront of} the quantum classifier (pre-order SDP). Both scenarious are outlined in the following.
\medskip

\subsubsection{Post order SDP}


The post order SDP is rather a straight forward application of the previously discussed concepts. 
\begin{equation}\label{eq:post_order_sdp}
    \begin{aligned}
        \max \quad & \quad \sum_i w_i \cdot y_{C_i}(\sigma_i) \\
        \text{s.t:} \quad 
        & \forall i \quad y_{C_i}(\sigma_i) = \text{Tr} \left[(U \sigma_k U^\dagger \otimes \Pi_{C_k}) \je  \right] \\
        & \je  \succeq 0 \\
        & \text{Tr}_2 \left[ \je  \right] = I_A \\
        & \forall \Pi_k \quad \text{Tr}_2 \left[(\Pi_k \otimes I) \je \right] - \gamma I \succeq 0 \\
        & \je  - \nicefrac{\alpha}{D} I  \succeq 0 
    \end{aligned}
\end{equation}
\medskip

\subsubsection{Pre-order SDP}
With the pre-order SDP, we can note that the unitary will "act" on the Choi matrix upon measurements, which will alter our measurements in the following way:
\begin{equation}
    y_k(\sigma) = \text{Tr} \left[ (\sigma \otimes \Pi_{k}) (I \otimes U) \je (I \otimes U^\dagger) \right]
\end{equation}
Interestingly, this also implies for our $\gamma$ constraint that:
\begin{equation}
    \text{Tr}_2 \left[(\Pi_k \otimes I)(I \otimes U) \je (I \otimes U^\dagger) \right] - \gamma I \succeq 0
\end{equation}
Aside from this change, the SDP formulation will be unchanged leading us to the formulation:
\begin{equation}\label{eq:pre_order_sdp}
    \begin{aligned}
        \max \quad & \quad \sum_i w_i \cdot y_{C_i}(\sigma_i) \\
        \text{s.t:} \quad 
        & \forall i \quad y_{C_i}(\sigma_i) = \text{Tr} \left[ [(\sigma_i \otimes \Pi_{C_i}) (I \otimes U) \je (I \otimes U^\dagger) \right] \\
        & \je  \succeq 0 \\
        & \text{Tr}_2 \left[ \je  \right] = I_A \\
        & \forall \Pi_k \quad \text{Tr}_2 \left[(\Pi_k \otimes I)(I \otimes U) \je (I \otimes U^\dagger) \right] - \gamma I \succeq 0 \\
        & \je  - \nicefrac{\alpha}{D} I  \succeq 0 
    \end{aligned}
\end{equation}


\section{Experiments}
In our experiments, we wanted to highlight the utility and limitations of utilizing a SDP Formulation as a bechmark for optimally robust quantum channels. See Appendix~\ref{apx:experimental_setup} for a detailed description of our data preprocessing and SDP implementation steps. We have considered the Iris dataset (\textbf{Iris}) with \textit{Angle} and \textit{Amplitude} embedding, the Pima Indians Diabetes dataset (\textbf{PID}) and the breast cancer dataset (\textbf{BC}). The latter two where only utilized using Amplitude embedding.

\smallskip

In our experiments, we want to highlight the performance of depolarizing noise against our optimal $(\alpha, \gamma)$ counter part. Further we want to analyze the effect of $\alpha$ and $\gamma$ on the certifiablility and at last we want to compare the effectiveness of different encodings.

\subsection{Classification of Depolarizing Noise against optimal noise}
Providing an optimal noise channel for various values of depolarizing noise serves as an experimental benchmark for the optimality of specific noise channels. 
Here, we outline the process. 
As a first step, one needs to reduce $\alpha$ and $\gamma$ to the parameters of the specific noise channel. 
Next, one can iterate over a set of parameters and compare the adversarial accuracy to that of the optimal noise channel, as illustrated in \autoref{fig:adv_acc}. 
For depolarizing noise, we observe that in most datasets, a small amount of depolarizing noise provides a good upper bound, but typically the effect decays with an increase in depolarizing noise.
\begin{figure*}[h!tb]
        \centering
        \begin{subfigure}{0.32\linewidth}
            \centering
            \includegraphics[width=\textwidth]{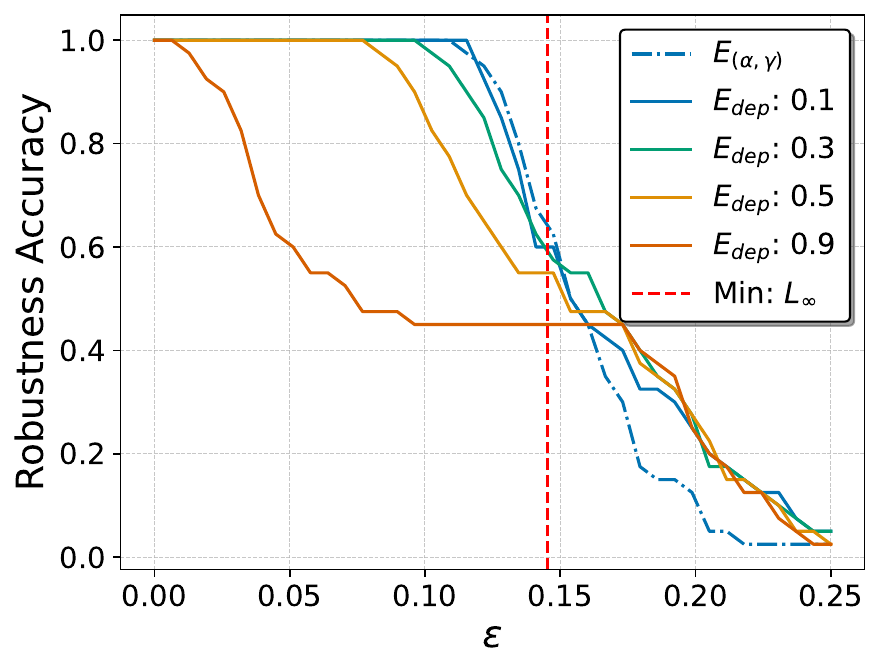}
            \caption{Iris (Amplitude)}
        \end{subfigure}
        \begin{subfigure}{0.32\linewidth}
            \centering
            \includegraphics[width=\textwidth]{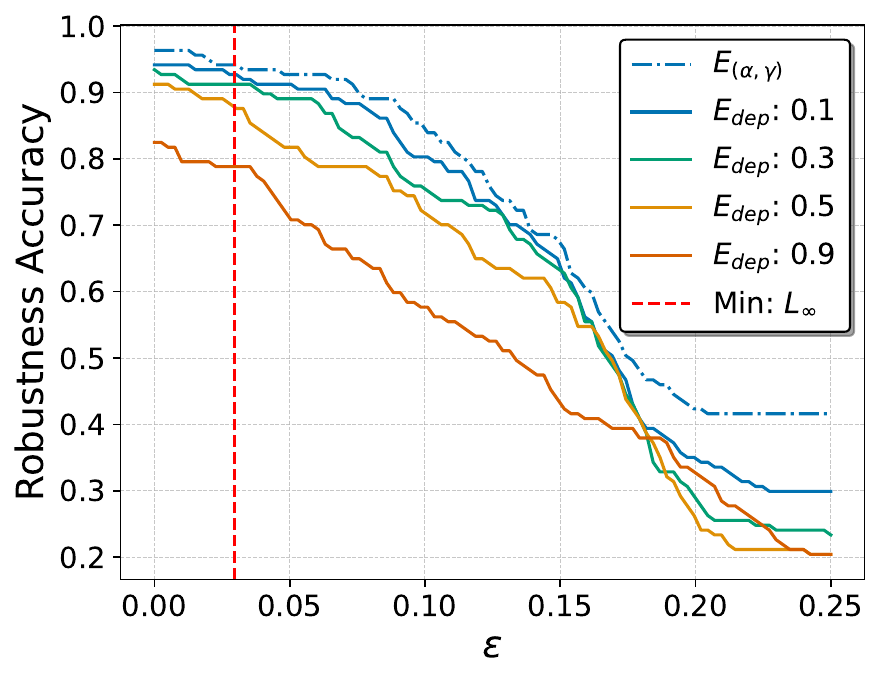}
            \caption{BC (Amplitude)}
        \end{subfigure}
        \begin{subfigure}{0.32\linewidth}
            \centering
            \includegraphics[width=\textwidth]{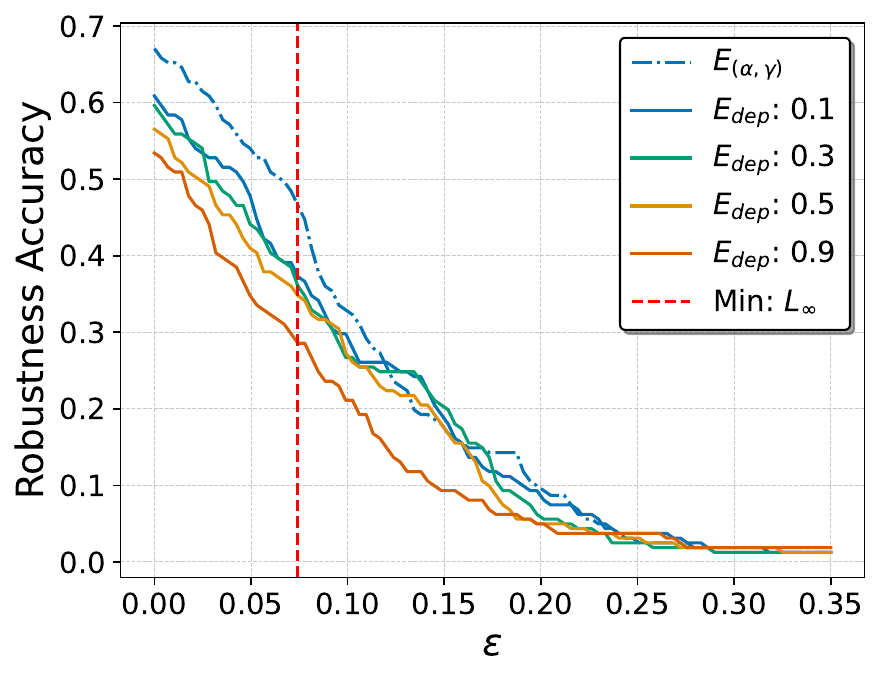}
            \caption{PID (Amplitude)}
        \end{subfigure}
    \caption{
    The plot shows adversarial accuracy under an FGSM attack across all evaluated datasets: the blue dashed curve, \(E_{(\alpha,\gamma)}\), represents adversarial accuracy with our best-case noise channel, the red dashed vertical line indicates the smallest \(L_\infty\) distance between any two samples from opposite classes, and the colored curves \(E_{dep}\) depict adversarial accuracy when a depolarization noise channel is applied at increasing noise levels. We include only one \((\alpha,\gamma)\)-channel because the individual channels largely overlap (See \autoref{apx:coincideness_of_optimal_channels}).}
    \label{fig:adv_acc}
\end{figure*}

\subsection{Influence on \texorpdfstring{$\alpha$}{alpha} and \texorpdfstring{$\gamma$}{gamma} on the certifiable robustness}
Next, we want to outline the effect of $\alpha$ and $\gamma$ on the certifiable accuracy. Therefore, we consider \autoref{eq:cert_distance} to compute the certifiable distance for each datasample in terms of the trace distance and then compute the portion of certifiable samples on the test set given a set of distances $\tau_D \in \{0.05, 0.10, 0.15\}$.
We first note that the effect of $\alpha$ and $\gamma$ on the difference between the class label and the follow up class (which is later on described in the fraction B in \autoref{corr:cert_distance}) scale symetrically with $\alpha$ and $\gamma$. 
See \autoref{fig:ydiff} for a visualization. 
Additionally one can observe that the difference always converge to zero for all datasets given: $\alpha = 0$ and $\gamma = 0.5$. 
For these values the quantum classifer is perturbed by a noise channel describing random guessing, which yields naturally a certifiable distance of $\tau_D = \infty$. 
Given the limited usefulness of a random guessing network, we decide to constrain the certifiable distance shown by the plots in \autoref{fig:cert_distances} by a $\delta$ on the difference among $y$ labels. 
To this end, we choose a $\delta$ of $0.05$ for the Iris dataset (Angle and Amplitude Embedding) and the BC dataset. For the PID dataset, we pick a $\delta$ of $0.01$, since the class probabilities are closed together due to the lower accuracy. 

\begin{figure}[h!]
    \centering
    \begin{subfigure}{0.45\linewidth}
        \centering
        \includegraphics[width=\textwidth]{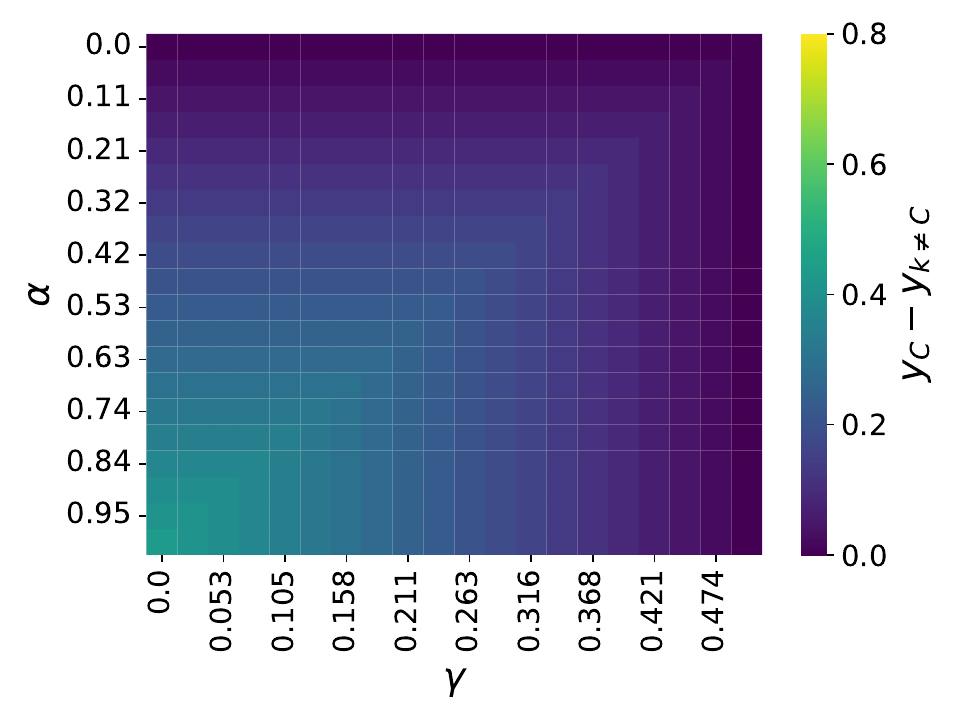}
        \caption{Iris (Amplitude)}
    \end{subfigure}
    \begin{subfigure}{0.45\linewidth}
        \centering
        \includegraphics[width=\textwidth]{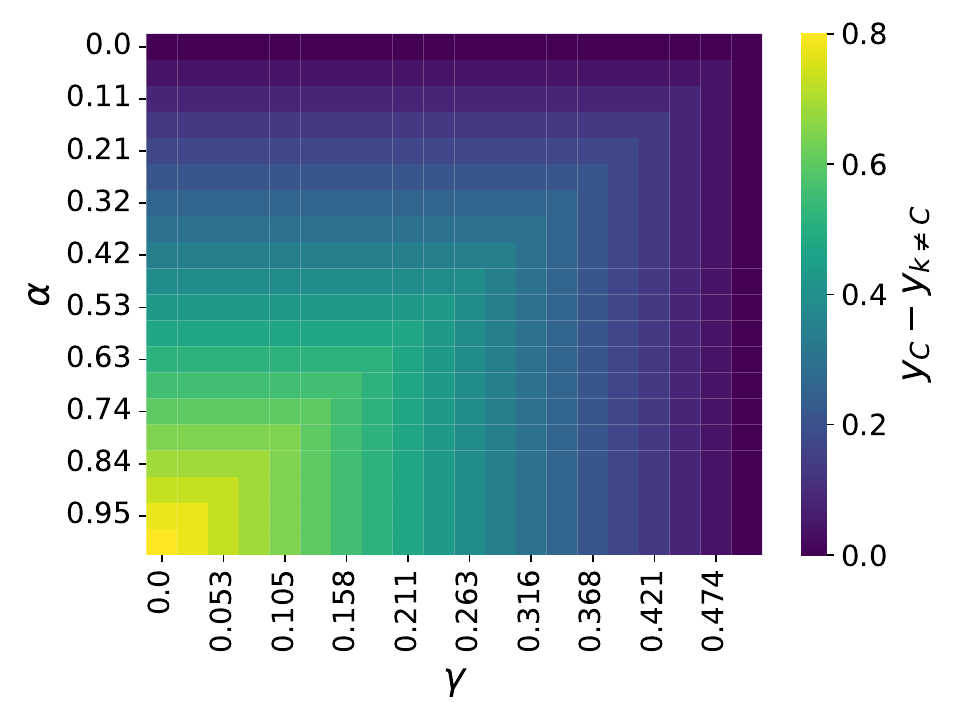}
        \caption{Iris (Angle)}
    \end{subfigure}
    \begin{subfigure}{0.45\linewidth}
        \centering
        \includegraphics[width=\textwidth]{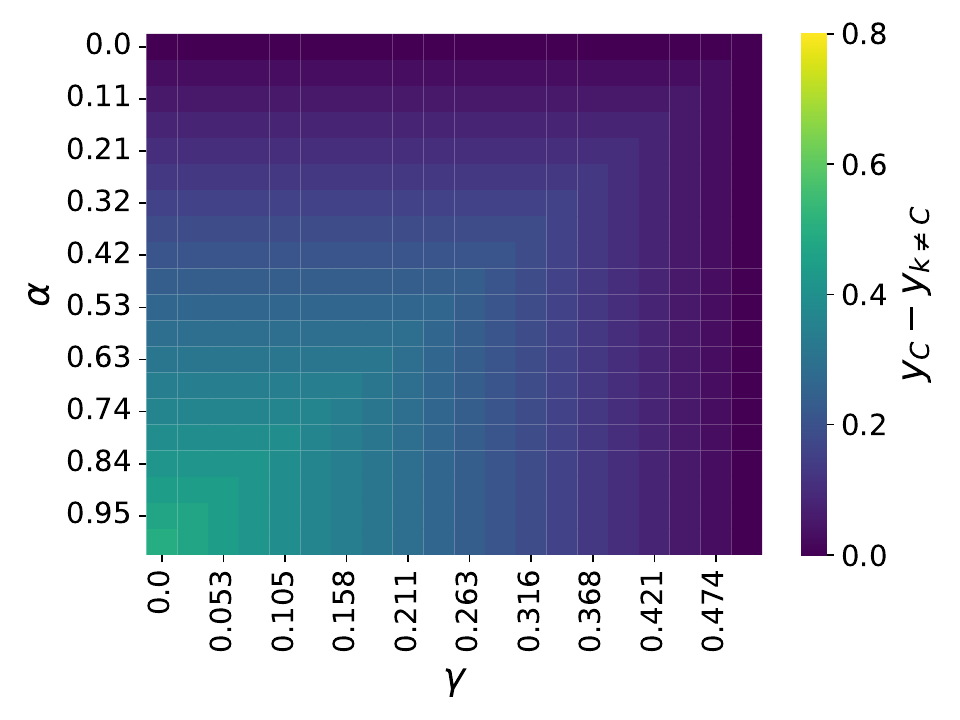}
        \caption{BC (Amplitude)}
    \end{subfigure}
    \begin{subfigure}{0.45\linewidth}
        \centering
        \includegraphics[width=\textwidth]{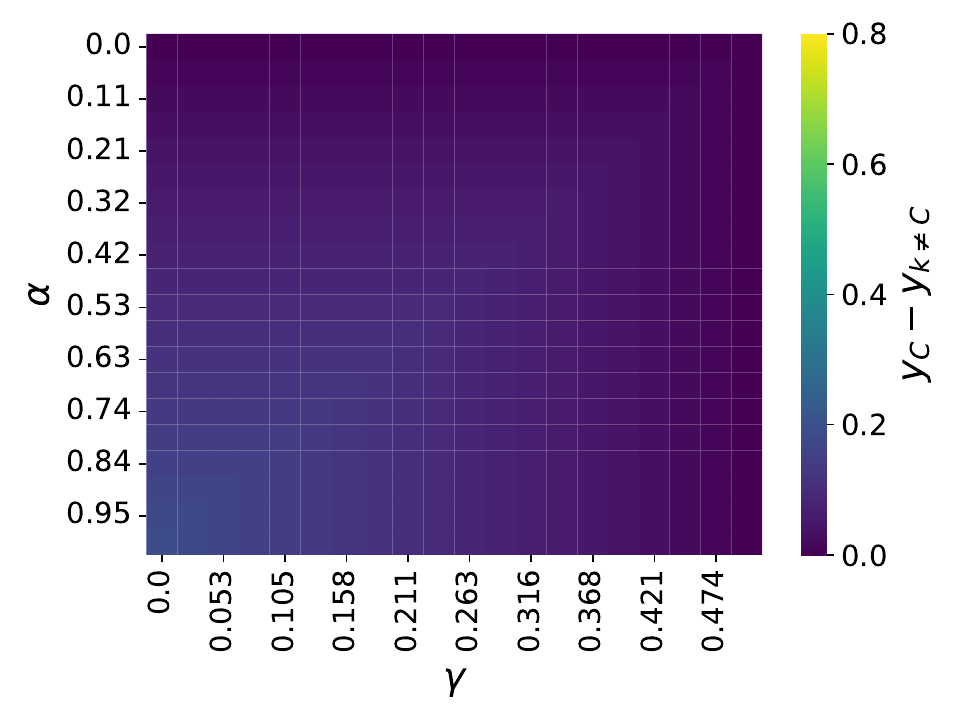}
        \caption{PID (Amplitude)}
    \end{subfigure}
    \caption{Difference $y_C - y_{k \neq C}$ for the various different datasets used in this study. }
    \label{fig:ydiff}
\end{figure}
\medskip

Concerning the certifiable samples in \autoref{fig:cert_distances}, one can see that the influence of $\alpha$ and $\gamma$ have an effect on the amount of certifiable distances given by the classifier. 
The area where the maximum of samples can be certified clearly shrinks in case the cut-off distance is increased from $0.05$ to $0.15$ and typically converges to a line like structure which is the maximum for $(\alpha = 0.11, \gamma = 0.447)$ (PID, Iris Angle and BC), $(\alpha = 0.16, \gamma = 0.421)$ (Iris Amplitude). Which additionally describes the smallest allowed difference considered in our experiments. 

Interestingly, across various datasets, lower values of $\gamma$ tend to extend the range of certifiable samples. These samples maintain their high certification level for a longer period before experiencing a characteristic sharp drop. This drop point depends on the chosen certifiable distance, the specific dataset, and the embedding used. In contrast, the parameter $\alpha$ seems to induce a smoother transition in the certification level. It remains an open question whether this observed behavior is a consequence of the SDP formulation itself or an inherent property of $(\alpha, \gamma)$-channels.

\subsection{Comparison of amplitude and angle embedding}
In \autoref{def:alpha_gamma_channel}, it is apparent, that the $\epsilon$ of the $\epsilon$-DP mechanism is proportional to $\alpha$, $\beta$ and the trace distance of the two input states, which highlights that different encodings, might additionally provide further robustness bounds on quantum classifiers. To provide further insights, we have normalized the $\epsilon$ values used in this study from the respective intervals of $[0;0.25]$ (Amplitude embedding) and $[0;0.9]$ (Angle Embedding) of the Iris dataset to the interval $[0;1]$. The results can be observed in \autoref{fig:amplitude_angle_comparison}. Overall it is apparent, that a better encoding shows a higher degree of adversarial accuracy, which decays at a similar rate with an increase in depolarizing noise and is still bounded by the optimally constructed $(\alpha, \gamma)$-Channel. 
\begin{figure}[h!]
    \centering
    \includegraphics[width=\linewidth]{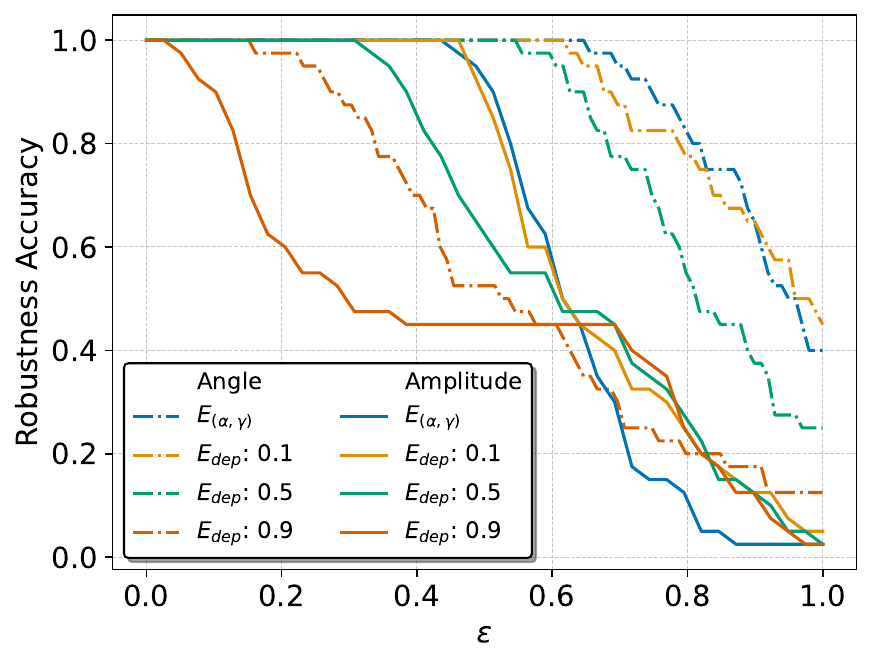}
    \caption{Normalized adversarial accuracy's between amplitude and angle encoding. The adversarial attacks where done using FGSM with a $L_\infty$-norm. The blue dashed curve, \(E_{(\alpha,\gamma)}\), represents adversarial accuracy with our best-case noise channel, and the colored curves \(E_{dep}\) indicate adversarial accuracy when a depolarization noise channel is applied at increasing noise levels.}
    \label{fig:amplitude_angle_comparison}
\end{figure}

\begin{figure*}
    \centering
    \begin{subfigure}{\linewidth}
        \includegraphics[width=\textwidth]{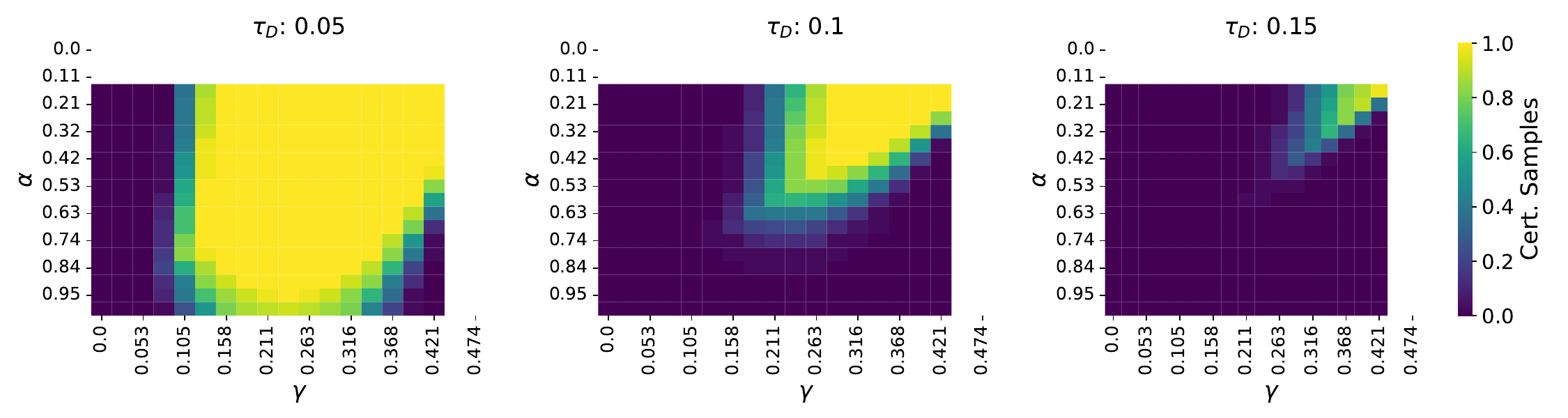}
        \caption{Iris (Amplitude)}
    \end{subfigure}
    \begin{subfigure}{\linewidth}
        \includegraphics[width=\textwidth]{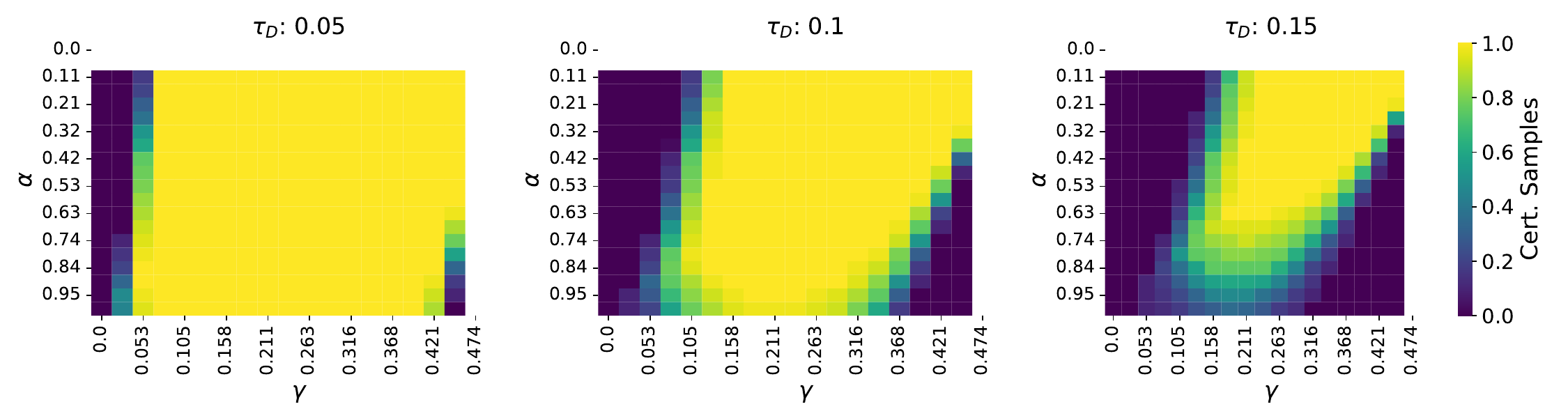}
        \caption{Iris (Angle)}
    \end{subfigure}
    \begin{subfigure}{\linewidth}
        \includegraphics[width=\textwidth]{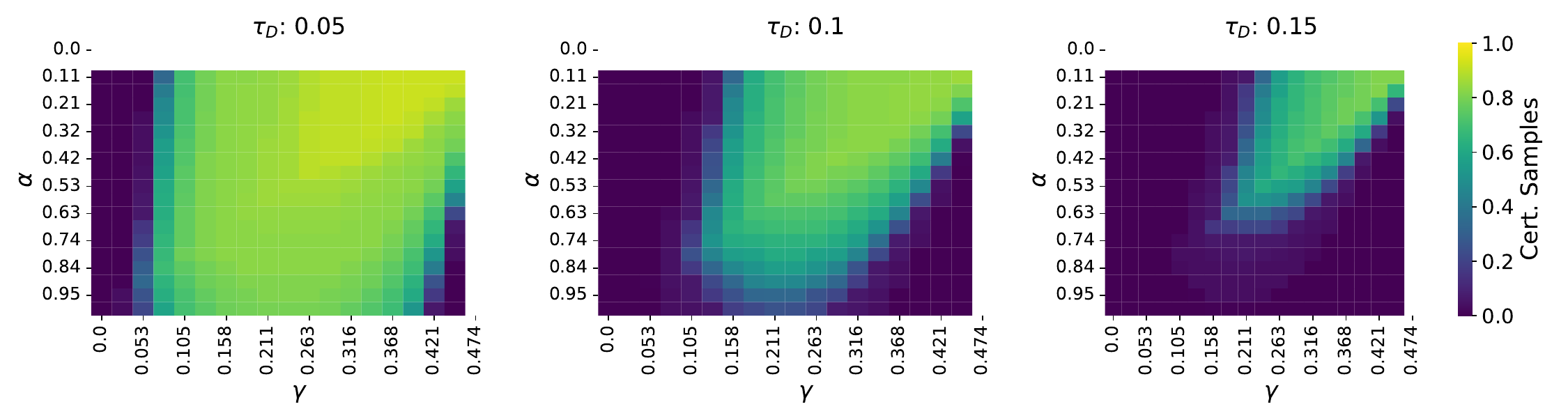}
        \caption{BC (Amplitude)}
    \end{subfigure}
    \begin{subfigure}{\linewidth}
        \includegraphics[width=\textwidth]{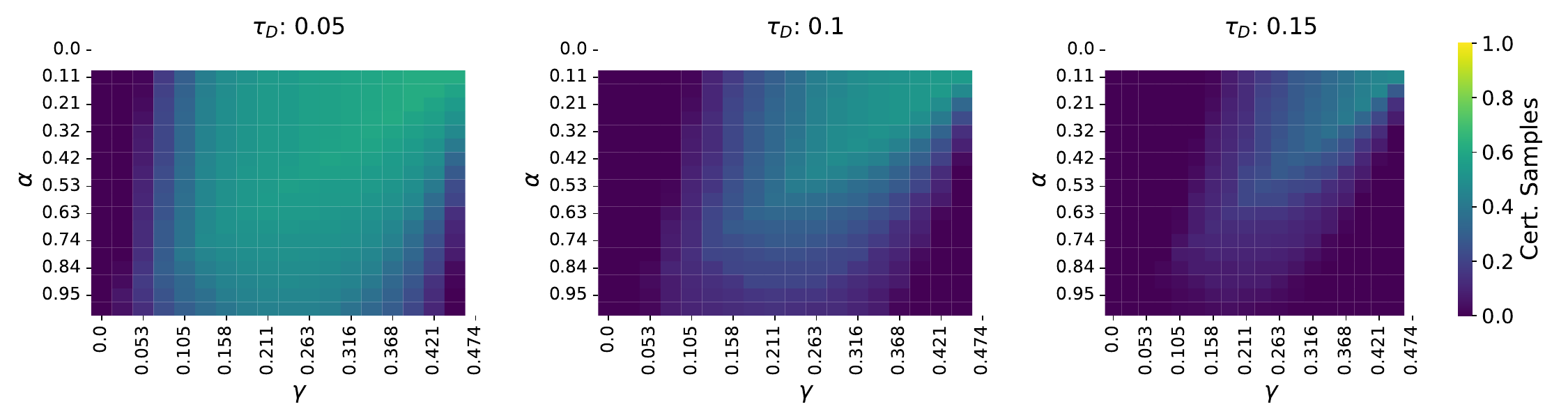}
        \caption{PID (Amplitude)}
    \end{subfigure}
    \caption{Portion of certifieable distances among the test set for different values of $\alpha$ and $\gamma$. The portion is computed by grouping among the levels: $\{0.05, 0.10, 0.15\}$. }
    \label{fig:cert_distances}
\end{figure*}

\section{Conclusion}
In this work, we have defined a family of noise channels, which are certifiable robust. We have hereby related robustness of QNNs to the contraction of the algorithm, to the smallest possible label assignment as well as to the encoding strategy used by the quantum classifier. 
Our SDP formulation, while theoretically useful for constructing optimal noise channels, suffers from numerical scaling limitations in practical applications.
\medskip

Our framework could be extended to evaluate other noise sources beyond depolarizing noise, potentially leading to a broader understanding of quantum noise and its impact on $\epsilon$-DP.
Next, while a small amount of depolarizing noise shows promise as a defense against evasion attacks, it remains unclear if noise inherent to NISQ devices can truly provide robustness. 
Our findings, along with previous work, suggest that increasing depolarization leads to decreased certifiability.
\medskip

The concept of noise channels as a general model for quantum computation, coupled with the existence of certifiable robust channels, opens exciting avenues. Constructing a variational ansatz with inherently certifiable behavior would be a significant advance. Such an ansatz could be easier to optimize compared to finding the \textit{optimal} noise model (a convex optimization problem), despite the computational challenges arising from exponentially growing Hilbert spaces. This could pave the way for theoretically grounded adversarial robustness in quantum classifiers. 
\medskip

At last we want to emphasize that based on this study optimizing for a robust noise channel yields a minor advantage compared to the larger increase of the robustness budget when scaling the amount of qubits.

\appendices

\section{Proof of \citet[Theorem 8.17]{wolf2012quantum}}\label{apx:proof_theorem817}
In the following \citet[Theorem 8.17]{wolf2012quantum} is outlined combined with its proof and the accompanying references.
\begin{theorem}\label{th:tighter_contraction_bound}
    Given $\mathcal{E}, \mathcal{E^\prime}$ be two trace preserving and Hermitian preserving linear maps, where we define $\mathcal{E}^\prime$ as:
    \begin{equation*}
        \mathcal{E}^\prime(\rho) = \text{Tr} \left[\rho\right] Y; \quad \text{Tr} \left[Y\right] = 1
    \end{equation*}
    If $\mathcal{E} - \kappa \mathcal{E}^\prime$ is positive for some $\kappa \geq 0$, then for all density matrices $\rho, \sigma$:
    \begin{equation*}
        \text{Tr} \lvert \mathcal{E}(\sigma-\rho)\rvert \leq (1-\kappa) \text{Tr} \lvert \sigma-\rho \rvert
    \end{equation*}
\end{theorem}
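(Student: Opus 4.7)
The plan is to exploit the Jordan (Hahn) decomposition of the Hermitian difference $\sigma - \rho$ together with the two standing hypotheses: trace preservation and the positivity of $\mathcal{E} - \kappa \mathcal{E}'$. Since $\sigma$ and $\rho$ are density matrices, $\Delta := \sigma - \rho$ is Hermitian, so I can write $\Delta = P - N$ with $P, N \succeq 0$ and orthogonal supports, which gives $\operatorname{Tr}|\Delta| = \operatorname{Tr}[P] + \operatorname{Tr}[N]$. Because both $\sigma$ and $\rho$ have unit trace, $\operatorname{Tr}[\Delta] = 0$, and therefore $\operatorname{Tr}[P] = \operatorname{Tr}[N] = \tfrac{1}{2}\operatorname{Tr}|\Delta|$.

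The crucial observation is that the reference channel $\mathcal{E}'$ collapses the difference: $\mathcal{E}'(P) - \mathcal{E}'(N) = (\operatorname{Tr}[P] - \operatorname{Tr}[N])\, Y = 0$. This lets me rewrite
\begin{equation*}
\mathcal{E}(\Delta) \;=\; \mathcal{E}(P) - \mathcal{E}(N) \;=\; (\mathcal{E} - \kappa\mathcal{E}')(P) \;-\; (\mathcal{E} - \kappa\mathcal{E}')(N),
\end{equation*}
which is the decisive step: I have now expressed $\mathcal{E}(\Delta)$ as a difference of two operators $A := (\mathcal{E}-\kappa\mathcal{E}')(P)$ and $B := (\mathcal{E}-\kappa\mathcal{E}')(N)$, both of which are positive semidefinite by the assumed positivity of $\mathcal{E} - \kappa\mathcal{E}'$.

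Next I apply the triangle inequality for the trace norm: for any $A, B \succeq 0$, $\operatorname{Tr}|A - B| \le \operatorname{Tr}[A] + \operatorname{Tr}[B]$. This reduces the problem to controlling the traces. Here the trace preservation of both $\mathcal{E}$ (by assumption) and $\mathcal{E}'$ (automatic, since $\operatorname{Tr}[\mathcal{E}'(\rho)] = \operatorname{Tr}[\rho]\operatorname{Tr}[Y] = \operatorname{Tr}[\rho]$) yields
\begin{equation*}
\operatorname{Tr}\bigl[(\mathcal{E}-\kappa\mathcal{E}')(P)\bigr] = (1-\kappa)\operatorname{Tr}[P], \qquad \operatorname{Tr}\bigl[(\mathcal{E}-\kappa\mathcal{E}')(N)\bigr] = (1-\kappa)\operatorname{Tr}[N].
\end{equation*}
Adding these and using $\operatorname{Tr}[P]+\operatorname{Tr}[N]=\operatorname{Tr}|\Delta|$ gives the claim $\operatorname{Tr}|\mathcal{E}(\sigma-\rho)| \le (1-\kappa)\operatorname{Tr}|\sigma-\rho|$.

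There is no serious obstacle; the argument is essentially bookkeeping once the right decomposition is in hand. The only subtle point worth verifying carefully is the cancellation $\mathcal{E}'(P)=\mathcal{E}'(N)$, which hinges on the trace-balanced property $\operatorname{Tr}[P]=\operatorname{Tr}[N]$ inherited from the fact that $\sigma$ and $\rho$ are both normalized. It is this trace-balance, together with the special rank-one (in the Liouville sense) form of $\mathcal{E}'$, that allows the positivity assumption on $\mathcal{E}-\kappa\mathcal{E}'$ to be upgraded into the full contractive estimate on $\operatorname{Tr}|\cdot|$; dropping normalization of $\sigma,\rho$ or changing the form of $\mathcal{E}'$ would break the argument.
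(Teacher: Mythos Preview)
Your proof is correct, but it proceeds by a genuinely different (and somewhat more elementary) route than the paper's.

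The paper argues via the variational characterization of the trace distance, $\tfrac{1}{2}\operatorname{Tr}|X| = \sup_{0\le P\le I}\operatorname{Tr}[XP]$, and the adjoint map: after inserting $-\kappa\,\mathcal{E}'(\sigma-\rho)=0$ it passes to $(\mathcal{E}-\kappa\mathcal{E}')^{*}(P)$, uses that positivity of a map implies positivity of its adjoint, and exploits that the adjoint of a trace-preserving map is unital to bound $0\le (\mathcal{E}-\kappa\mathcal{E}')^{*}(P)\le (1-\kappa)I$. Your argument instead works directly on the Jordan decomposition $\sigma-\rho=P-N$, uses $\operatorname{Tr}[P]=\operatorname{Tr}[N]$ to get $\mathcal{E}'(P)=\mathcal{E}'(N)$, and then applies the trace-norm triangle inequality to $A-B$ with $A,B\succeq 0$. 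Both exploit the same cancellation $\mathcal{E}'(\sigma-\rho)=0$; the difference is where the analysis lives: the paper in the Heisenberg (dual) picture, you in the Schr\"odinger picture. Your route avoids duality entirely and is arguably cleaner here; the paper's dual approach is the one that generalizes more readily to other norms via their dual formulations.
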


\begin{proof}
We start off by noting that~\footnote{Since $\text{Tr}\left[\sigma - \rho\right] = \text{Tr}\left[\sigma\right] - \text{Tr}\left[\rho \right] = 1-1 = 0$} $\mathcal{E}^\prime(\sigma-\rho) = \text{Tr}\left[\sigma - \rho\right] Y = 0$. We can hence reformulate the trace distance as follows:
\begin{align*}
    \frac{1}{2} \text{Tr} \lvert \mathcal{E}(\sigma-\rho)\rvert & = \\
    \sup_{0 \leq P \leq I} \left\{\text{Tr} \left[ \mathcal{E}(\sigma-\rho) P\right]\right\} & = \\
    \sup_{0 \leq P \leq I} \left\{\text{Tr} \left[ (\mathcal{E}(\sigma-\rho) - \kappa \mathcal{E}^\prime(\sigma-\rho)) P\right]\right\} & = \\
    \sup_{0 \leq P \leq I} \left\{\text{Tr} \left[ ((\mathcal{E}- \kappa \mathcal{E}^\prime)(\sigma-\rho)) P\right]\right\} & 
\end{align*}
Next, we will define the dual map, such that:
\begin{equation}
    \text{Tr} \left[ ((\mathcal{E}- \kappa \mathcal{E}^\prime)(\sigma-\rho)) P\right] = \text{Tr} \left[ ((\mathcal{E}- \kappa \mathcal{E}^\prime)^*(P)) (\sigma-\rho)\right]
\end{equation}

Since $\mathcal{E}- \kappa \mathcal{E}^\prime$ is defined as positive, we can conclude that $(\mathcal{E}- \kappa \mathcal{E}^\prime)^*$ is positive too~\cite[Proposition 2.18]{Watrous_2018}. we can follow that since $P$ is positive and bounded by $I$, that $P^\prime = (\mathcal{E}- \kappa \mathcal{E}^\prime)^*(P)$ is positive. Since $\mathcal{E}$ and $\mathcal{E}^\prime$ are trace preserving, the application of identity to their dual channel scales $I$ as: $(\mathcal{E}- \kappa \mathcal{E}^\prime)^*(I) = (1-\kappa) I$. Given now that by definition $P \leq I$, we can follow that
\begin{equation}
    (\mathcal{E}- \kappa \mathcal{E}^\prime)^*(I) = (1-\kappa) I \geq P^\prime = (\mathcal{E}- \kappa \mathcal{E}^\prime)^*(P)
\end{equation}
We can follow, that the resulting projection is bounded by: $0 \leq P^\prime \leq (1-\kappa) I$, which allows the following relation:
\begin{align*}
    \sup_{0 \leq P \leq I} \left\{\text{Tr} \left[ ((\mathcal{E}- \kappa \mathcal{E}^\prime)(\sigma-\rho)) P\right]\right\} & = \\
    \sup_{0 \leq P \leq I} \left\{\text{Tr} \left[ ((\mathcal{E}- \kappa \mathcal{E}^\prime)^*(P)) (\sigma-\rho)\right]\right\} & \leq \\
    \sup_{0 \leq P^\prime \leq (1-\kappa)I} \left\{\text{Tr} \left[(\sigma-\rho) P^\prime\right]\right\} & = \\
    (1-\kappa) \sup_{0 \leq P^\prime \leq I} \left\{\text{Tr} \left[(\sigma-\rho) P^\prime\right]\right\} & = \\
    \frac{1-\kappa}{2} \text{Tr} \lvert \sigma-\rho\rvert
\end{align*}
From which we can follow, that:
\begin{equation}
    \frac{1}{2} \text{Tr} \lvert \mathcal{E}(\sigma-\rho)\rvert \leq \frac{1-\kappa}{2} \text{Tr} \lvert \sigma-\rho\rvert
\end{equation}
\end{proof}

\section{Experimental Setup}\label{apx:experimental_setup}

\subsection{Training of Neural networks}
We trained four different QNNs, using a similar scheme as in our previous study. Overall we used the Pima Indians Diabetes Dataset, the Breast Cancer and the Iris dataset. We used CrossEntropy as a loss function for all datasets and considered the class imblanance by adding weights to the loss. As a optimizer we used Adam.

\subsection{Data preprocessing} 
As already pointed out we used overall three datasets \texttt{Iris}, \texttt{Pima Indians Diabetes} and \texttt{Breast Cancer}.
\medskip

\subsubsection{Iris} For Iris we followed the approach by \citet{Du2021}, dropping the \texttt{petal\_width} feature as well as the \texttt{versicolor} class. We then normalized each datapoint by its respective $\lvert \lvert . \lvert \lvert_2$-Norm and performed a train/test split of $40:60$. Regarding the Angle Embedding, the datapoints where min max normalized to the interval $[0;\pi]$.
\medskip

\subsubsection{Pima Indians Diabetes (PID)}
For the Pima Indians Diabetes dataset, we first dropped all nan or duplicate features from the dataset, next to overcome the present class imbalance, we removed overall $232$ datapoints from the majority class at random. We used the remaining dataset with $536$ datapoints and split them into a train set consisting of $375$ datapoints and a test set of $161$ datapoints. To provide a comparabilty with the Iris dataset, we additionally normalized each datapoint with it's respective $\lvert \lvert . \lvert \lvert_2$-Norm.
\medskip

\subsubsection{Breast Cancer Wisconsin (BC)}
For the Breast Cancer Wisconsin dataset we found overall $683$ features, with a dimension of $9$, since this would require four qubits in the angle embedding, we performed a PCA with eight components, effectively rescaling the dimensionality to eight. 
We rescaled the data into the interval $\left(0;1\right]$ and normalized each feature vector by it's respective $\lvert \lvert . \lvert \lvert_2$-Norm. No further preprocessing was conducted.

\subsection{QNN Architecture}
As pointed out earlier, we used a similar scheme of dataembedding plus strongly entangling layers. For embedding we trained Iris once using angle embedding and once amplitude embedding. Since we where only able to solve up to 3 qubits, we refrained from using qngle embedding for the PID and BC datasets and applied hereby amplitude embedding. For an overview of the various hyperparameters used in this study, refer to \autoref{tab:hyperparams}.

\begin{table}[hbt!]
 \caption{Hyperparameters used for the various types of QNNs for the datasets Iris, Pima Indians Diabetes (PID) and Breast Cancer (BC).}\label{tab:hyperparams}
\begin{adjustbox}{width=\linewidth,center}
    \begin{tabular}{lllllll}
        \toprule
        Datset & Embedding & Qubits     & Layers     & Batch      & Learn. Rate   & Epochs \\
        \midrule
        Iris   & Amplitude & 2          & 2          & 30         & 0.05          & 100        \\
        Iris   & Angle     & 3          & 2          & 30         & 0.01          & 100        \\
        BC     & Amplitude & 3          & 40         & 16         & 0.0005        & 10         \\
        PID    & Amplitude & 3          & 16         & 16         & 0.005         & 10  \\
        \bottomrule
    \end{tabular}
 \end{adjustbox}
\end{table}

At last we want to point out how the depolarizing mechanism was constructed in our study

\subsection{Implementation of the SDP}

\subsubsection{Complexity analysis}
In this study, we use MOSEK in combination with CVXpy~\cite{Diamond2016} with an interior-point method, scaling its runtime with $\mathcal{O}(v^{3.5})$~\cite{Andersen2009TheHA} to solve the SDP, where $v$ represents the number of variables in the SDP. In our work, we embed a Choi matrix of size $d^2 \times d^2$, where $d = 2^n$ describes the dimensionality of the Hilbert space. This indicates that the number of variables scales biquadratically with the dimensionality of the Hilbert space of the used quantum classifier, and hence exponentially in terms of the number of qubits:
\begin{equation}
    v = d^4 = (2^n)^4
\end{equation}
With $d$ the dimensionality of the hilberspace and $n$ the number of qubits.
The biquadratic scaling also indicates why, in contrast to other methods like \citet{Guan_2021}, we can only exhibit our study on maximally three qubits.  
\medskip

\subsubsection{Numerical Considerations}
As already pointed out, we used CVXpy~\cite{Diamond2016} to implement the SDP in \autoref{eq:pre_order_sdp} combined with MOSEK as a solver. During optimization, we resolved multiple numerical issues. Initially, we approximated the constraint: $\text{Tr}_2 \left[ \mathcal{J} \right] = I_A$, by replacing it with: $\left\| \text{Tr}_2 \left[ \mathcal{J} \right] - I_A \right\|_\text{F} \leq \delta$, setting $\delta = \num{1e-10}$. Secondly, instead of explicitly stating the constraint that $\mathcal{J} \succeq 0$, we asserted on the variable itself that $\mathcal{J}$ is PSD, which typically guided the optimizer to feasible solutions. Lastly, we emphasize that our optimal noise channel was constructed on the test set, on which we further applied the attack. The main reasoning behind this was to provide the actual best possible noise channel when averaging over the test set.

\section{Overlapping Robustness in Optimal Noise Channels}\label{apx:coincideness_of_optimal_channels}
During our work, we found that typically the optimally constructed noise channels coincide in terms of their adversarial accuracy. This provides a challenging perspective as $\alpha$ and $\gamma$ seemingly control the contractiveness and the minimal possible measurement, respectively. Nevertheless, we want to emphasize that $\alpha$ provides an upper bound on the contractiveness of the noise channel by \autoref{def:alpha_gamma_channel}, and $\gamma$ \textit{a lower bound} on the minimal measurements. Those bounds suffice to control $\epsilon$-DP. Nevertheless, the optimizer might choose a different specific values describing the contraction and smallest possible measurement. We hence saw as outlined  for Iris Amplitude in \autoref{tab:adv_accs_opt}, that the adversarial accuracies typically where overlapping.
We hence simply used one value for the effect of $\alpha$ and $\gamma$ in the following when describing adversarial accuracy's.
\begin{table}[]
    \centering
    \caption{Adversarial Accuracies for Iris Amplitude, among various values of $\alpha$ and $\gamma$ parameterized by different values of depolarizing noise. Produced by the optimal noise channel}
    \label{tab:adv_accs_opt}
    \begin{adjustbox}{max width=\linewidth}
        
        \begin{tabular}{ll|c|c|c|c|c|c|c|c|c|c}
        \toprule
        \multicolumn{2}{l|}{}         & \multicolumn{10}{ c }{Adversarial Budget ($\epsilon$)} \\ 
        \midrule
         $\alpha$ & $\gamma$ & 0.0 & 0.03 & 0.05 & 0.08 & 0.1 & 0.13 & 0.15 & 0.18 & 0.21 & 0.23 \\
        \midrule
            1.0 & 0.0 & 1.0 & 1.0 & 1.0 & 1.0 & 1.0 & 0.9 & 0.5 & 0.17 & 0.05 & 0.03\\
            0.9 & 0.05 & 1.0 & 1.0 & 1.0 & 1.0 & 1.0 & 0.9 & 0.5 & 0.17 & 0.05 & 0.03\\
            0.7 & 0.15 & 1.0 & 1.0 & 1.0 & 1.0 & 1.0 & 0.9 & 0.5 & 0.17 & 0.05 & 0.03\\
            0.5 & 0.25 & 1.0 & 1.0 & 1.0 & 1.0 & 1.0 & 0.9 & 0.5 & 0.17 & 0.05 & 0.03\\
            0.1 & 0.45 & 1.0 & 1.0 & 1.0 & 1.0 & 1.0 & 0.9 & 0.5 & 0.17 & 0.05 & 0.03\\
            0.05 & 0.48 & 1.0 & 1.0 & 1.0 & 1.0 & 1.0 & 0.9 & 0.5 & 0.17 & 0.05 & 0.03\\
            0.0 & 0.5 & 0.43 & 0.43 & 0.43 & 0.43 & 0.43 & 0.43 & 0.43 & 0.43 & 0.43 & 0.43\\
        \bottomrule
    \end{tabular}
    \end{adjustbox}
\end{table}

\clearpage
\pagebreak
\newpage
\bibliographystyle{plainnat}
\bibliography{references}

\begin{thebibliography}{33}
\providecommand{\natexlab}[1]{#1}
\providecommand{\url}[1]{\texttt{#1}}
\expandafter\ifx\csname urlstyle\endcsname\relax
  \providecommand{\doi}[1]{doi: #1}\else
  \providecommand{\doi}{doi: \begingroup \urlstyle{rm}\Url}\fi

\bibitem[Abadi et~al.(2016)Abadi, Chu, Goodfellow, McMahan, Mironov, Talwar, and Zhang]{abadi2016deep}
Martin Abadi, Andy Chu, Ian Goodfellow, H~Brendan McMahan, Ilya Mironov, Kunal Talwar, and Li~Zhang.
\newblock Deep learning with differential privacy.
\newblock In \emph{Proceedings of the 2016 ACM SIGSAC conference on computer and communications security}, pages 308--318, 2016.

\bibitem[Abbas et~al.(2021)Abbas, Sutter, Zoufal, Lucchi, Figalli, and Woerner]{abbas2021power}
Amira Abbas, David Sutter, Christa Zoufal, Aur{\'e}lien Lucchi, Alessio Figalli, and Stefan Woerner.
\newblock The power of quantum neural networks.
\newblock \emph{Nature Computational Science}, 1\penalty0 (6):\penalty0 403--409, 2021.

\bibitem[Andersen(2009)]{Andersen2009TheHA}
Erling~D. Andersen.
\newblock The homogeneous and self-dual model and algorithm for linear optimization. mosek technical report: Tr-1-2009.
\newblock 2009.
\newblock URL \url{https://api.semanticscholar.org/CorpusID:16771233}.

\bibitem[Angrisani et~al.(2023)Angrisani, Doosti, and Kashefi]{Angrisani2023}
Armando Angrisani, Mina Doosti, and Elham Kashefi.
\newblock A unifying framework for differentially private quantum algorithms.
\newblock \emph{arXiv preprint arXiv:2307.04733}, 2023.

\bibitem[Biamonte et~al.(2017)Biamonte, Wittek, Pancotti, Rebentrost, Wiebe, and Lloyd]{biamonte2017quantum}
Jacob Biamonte, Peter Wittek, Nicola Pancotti, Patrick Rebentrost, Nathan Wiebe, and Seth Lloyd.
\newblock Quantum machine learning.
\newblock \emph{Nature}, 549\penalty0 (7671):\penalty0 195--202, 2017.

\bibitem[Cerezo et~al.(2022)Cerezo, Verdon, Huang, Cincio, and Coles]{cerezo2022challenges}
M~Cerezo, Guillaume Verdon, Hsin-Yuan Huang, Lukasz Cincio, and Patrick~J Coles.
\newblock Challenges and opportunities in quantum machine learning.
\newblock \emph{Nature Computational Science}, 2\penalty0 (9):\penalty0 567--576, 2022.

\bibitem[Choi(1975)]{CHOI1975285}
Man-Duen Choi.
\newblock Completely positive linear maps on complex matrices.
\newblock \emph{Linear Algebra and its Applications}, 10\penalty0 (3):\penalty0 285--290, 1975.
\newblock ISSN 0024-3795.
\newblock \doi{https://doi.org/10.1016/0024-3795(75)90075-0}.
\newblock URL \url{https://www.sciencedirect.com/science/article/pii/0024379575900750}.

\bibitem[Cohen et~al.(2019)Cohen, Rosenfeld, and Kolter]{cohen2019certified}
Jeremy Cohen, Elan Rosenfeld, and Zico Kolter.
\newblock Certified adversarial robustness via randomized smoothing.
\newblock In \emph{international conference on machine learning}, pages 1310--1320. PMLR, 2019.

\bibitem[Diamond and Boyd(2016)]{Diamond2016}
Steven Diamond and Stephen Boyd.
\newblock Cvxpy: A python-embedded modeling language for convex optimization.
\newblock \emph{Journal of Machine Learning Research}, 17:\penalty0 1--5, 2016.
\newblock URL \url{http://www.cvxpy.org/}.

\bibitem[Du et~al.(2021)Du, Hsieh, Liu, Tao, and Liu]{Du2021}
Yuxuan Du, Min-Hsiu Hsieh, Tongliang Liu, Dacheng Tao, and Nana Liu.
\newblock Quantum noise protects quantum classifiers against adversaries.
\newblock \emph{PHYSICAL REVIEW RESEARCH}, 3:\penalty0 23153, 2021.
\newblock \doi{10.1103/PhysRevResearch.3.023153}.

\bibitem[Dwork et~al.(2014)Dwork, Roth, et~al.]{dwork2014algorithmic}
Cynthia Dwork, Aaron Roth, et~al.
\newblock The algorithmic foundations of differential privacy.
\newblock \emph{Foundations and Trends{\textregistered} in Theoretical Computer Science}, 9\penalty0 (3--4):\penalty0 211--407, 2014.

\bibitem[Franco et~al.(2024)Franco, Sakhnenko, Stolpmann, Thuerck, Petsch, R{\"u}ll, and Lorenz]{franco2024predominant}
Nicola Franco, Alona Sakhnenko, Leon Stolpmann, Daniel Thuerck, Fabian Petsch, Annika R{\"u}ll, and Jeanette~Miriam Lorenz.
\newblock Predominant aspects on security for quantum machine learning: Literature review.
\newblock \emph{arXiv preprint arXiv:2401.07774}, 2024.

\bibitem[Goodfellow et~al.(2015)Goodfellow, Shlens, and Szegedy]{goodfellow2015explaining}
Ian~J. Goodfellow, Jonathon Shlens, and Christian Szegedy.
\newblock Explaining and harnessing adversarial examples, 2015.

\bibitem[Guan et~al.(2021)Guan, Fang, and Ying]{Guan_2021}
Ji~Guan, Wang Fang, and Mingsheng Ying.
\newblock \emph{Robustness Verification of Quantum Classifiers}, page 151–174.
\newblock Springer International Publishing, 2021.
\newblock ISBN 9783030816858.
\newblock \doi{10.1007/978-3-030-81685-8_7}.
\newblock URL \url{http://dx.doi.org/10.1007/978-3-030-81685-8_7}.

\bibitem[Hirche et~al.(2023)Hirche, Rouz{\'e}, and Fran{\c{c}}a]{Hirche2023}
Christoph Hirche, Cambyse Rouz{\'e}, and Daniel~Stilck Fran{\c{c}}a.
\newblock Quantum differential privacy: An information theory perspective.
\newblock \emph{IEEE Transactions on Information Theory}, 2023.

\bibitem[Huang et~al.(2023)Huang, Tsai, Yang, Su, Yu, Chen, and Kuo]{Huang2023}
Jhih-Cing Huang, Yu-Lin Tsai, Chao-Han~Huck Yang, Cheng-Fang Su, Chia-Mu Yu, Pin-Yu Chen, and Sy-Yen Kuo.
\newblock Certified robustness of quantum classifiers against adversarial examples through quantum noise.
\newblock In \emph{ICASSP 2023-2023 IEEE International Conference on Acoustics, Speech and Signal Processing (ICASSP)}, pages 1--5. IEEE, 2023.

\bibitem[Knee et~al.(2018{\natexlab{a}})Knee, Bolduc, Leach, and Gauger]{Knee_2018}
George~C. Knee, Eliot Bolduc, Jonathan Leach, and Erik~M. Gauger.
\newblock Quantum process tomography via completely positive and trace-preserving projection.
\newblock \emph{Physical Review A}, 98\penalty0 (6), December 2018{\natexlab{a}}.
\newblock ISSN 2469-9934.
\newblock \doi{10.1103/physreva.98.062336}.
\newblock URL \url{http://dx.doi.org/10.1103/PhysRevA.98.062336}.

\bibitem[Knee et~al.(2018{\natexlab{b}})Knee, Bolduc, Leach, and Gauger]{Knee_2019}
George~C Knee, Eliot Bolduc, Jonathan Leach, and Erik~M Gauger.
\newblock Quantum process tomography via completely positive and trace-preserving projection.
\newblock \emph{Physical Review A}, 98\penalty0 (6):\penalty0 062336, 2018{\natexlab{b}}.

\bibitem[Kurakin et~al.(2017)Kurakin, Goodfellow, and Bengio]{kurakin2017adversarial}
Alexey Kurakin, Ian Goodfellow, and Samy Bengio.
\newblock Adversarial machine learning at scale, 2017.

\bibitem[Lecuyer et~al.(2019)Lecuyer, Atlidakis, Geambasu, Hsu, and Jana]{lecuyer2019certified}
Mathias Lecuyer, Vaggelis Atlidakis, Roxana Geambasu, Daniel Hsu, and Suman Jana.
\newblock Certified robustness to adversarial examples with differential privacy.
\newblock In \emph{2019 IEEE symposium on security and privacy (SP)}, pages 656--672. IEEE, 2019.

\bibitem[Lu et~al.(2020)Lu, Duan, and Deng]{lu2020quantum}
Sirui Lu, Lu-Ming Duan, and Dong-Ling Deng.
\newblock Quantum adversarial machine learning.
\newblock \emph{Physical Review Research}, 2\penalty0 (3):\penalty0 033212, 2020.

\bibitem[Madry et~al.(2019)Madry, Makelov, Schmidt, Tsipras, and Vladu]{madry2019deep}
Aleksander Madry, Aleksandar Makelov, Ludwig Schmidt, Dimitris Tsipras, and Adrian Vladu.
\newblock Towards deep learning models resistant to adversarial attacks, 2019.

\bibitem[Nielsen and Chuang(2011)]{nielsen_chuang}
Michael~A. Nielsen and Isaac~L. Chuang.
\newblock \emph{Quantum Computation and Quantum Information: 10th Anniversary Edition}.
\newblock Cambridge University Press, 2011.
\newblock ISBN 9781107002173.

\bibitem[Preskill(2018)]{Preskill2018quantumcomputingin}
John Preskill.
\newblock Quantum {C}omputing in the {NISQ} era and beyond.
\newblock \emph{{Quantum}}, 2:\penalty0 79, August 2018.
\newblock ISSN 2521-327X.
\newblock \doi{10.22331/q-2018-08-06-79}.
\newblock URL \url{https://doi.org/10.22331/q-2018-08-06-79}.

\bibitem[Schuld et~al.(2015)Schuld, Sinayskiy, and Petruccione]{schuld2015introduction}
Maria Schuld, Ilya Sinayskiy, and Francesco Petruccione.
\newblock An introduction to quantum machine learning.
\newblock \emph{Contemporary Physics}, 56\penalty0 (2):\penalty0 172--185, 2015.

\bibitem[Watrous(2018)]{Watrous_2018}
John Watrous.
\newblock \emph{The Theory of Quantum Information}.
\newblock Cambridge University Press, 2018.

\bibitem[Weber et~al.(2021)Weber, Liu, Li, Zhang, and Zhao]{Weber2021}
Maurice Weber, Nana Liu, Bo~Li, Ce~Zhang, and Zhikuan Zhao.
\newblock Optimal provable robustness of quantum classification via quantum hypothesis testing.
\newblock \emph{npj Quantum Information 2021 7:1}, 7:\penalty0 1--12, 5 2021.
\newblock ISSN 2056-6387.
\newblock \doi{10.1038/s41534-021-00410-5}.
\newblock URL \url{https://www.nature.com/articles/s41534-021-00410-5}.

\bibitem[West et~al.(2023)West, Erfani, Leckie, Sevior, Hollenberg, and Usman]{west2023benchmarking}
Maxwell~T West, Sarah~M Erfani, Christopher Leckie, Martin Sevior, Lloyd~CL Hollenberg, and Muhammad Usman.
\newblock Benchmarking adversarially robust quantum machine learning at scale.
\newblock \emph{Physical Review Research}, 5\penalty0 (2):\penalty0 023186, 2023.

\bibitem[Winderl et~al.(2023)Winderl, Franco, and Lorenz]{winderl2023quantum}
David Winderl, Nicola Franco, and Jeanette~Miriam Lorenz.
\newblock Quantum neural networks under depolarization noise: Exploring white-box attacks and defenses.
\newblock \emph{arXiv preprint arXiv:2311.17458}, 2023.

\bibitem[Wittek(2014)]{wittek2014quantum}
Peter Wittek.
\newblock \emph{Quantum machine learning: what quantum computing means to data mining}.
\newblock Academic Press, 2014.

\bibitem[Wolf(2012)]{wolf2012quantum}
Michael~M Wolf.
\newblock Quantum channels and operations-guided tour.
\newblock 2012.

\bibitem[Wood et~al.(2015)Wood, Biamonte, and Cory]{Wood_2013}
Christopher~J Wood, Jacob~D Biamonte, and David~G Cory.
\newblock Tensor networks and graphical calculus for open quantum systems.
\newblock \emph{Quantum Information \& Computation}, 15\penalty0 (9-10):\penalty0 759--811, 2015.

\bibitem[Zhou and Ying(2017)]{Zhou2017}
Li~Zhou and Mingsheng Ying.
\newblock Differential privacy in quantum computation.
\newblock \emph{Proceedings - IEEE Computer Security Foundations Symposium}, pages 249--262, 9 2017.
\newblock ISSN 19401434.
\newblock \doi{10.1109/CSF.2017.23}.

\end{thebibliography}

\end{document}